\renewcommand{\sffamily}{\usefont{T1}{cmss}{m}{n}}
\newcommand{\DDegree}{{\small\textsf{DDegree}}\xspace}
\newcommand{\DDegCol}{{\small\textsf{DDegCol}}\xspace}
\newcommand{\SDegree}{{\small\textsf{SDegree}}\xspace}
\newcommand{\BitCol}{{\small\textsf{BitCol}}\xspace}
\newcommand{\DIST}{{\small\textsf{DIST}}\xspace}
\newcommand{\EBBkC}{{\small\textsf{EBBkC}}\xspace}
  \def\th@definition{
  \thm@headfont{\itshape} 
  \thm@notefont{} 
  \rm
}
\theoremstyle{definition}
\newtheorem{definition}{Definition}
\newtheorem{lemma}{Lemma}
\newtheorem{theorem}{Theorem}
\newtheorem{example}{Example}
\newcommand{\DAG}{\vec{G}}
\newcommand{\V}{\V_{\DAG}}
\newcommand{\A}{\A_{\DAG}}
\newcommand{\Deg}[1]{d(#1)}
\newcommand{\Nbr}[1]{N_{\DAG}(#1)}
\newcommand{\VG}{\:\rule[-.7ex]{0.6pt}{2.5ex}\:V_{\DAG}\:\rule[-.7ex]{0.6pt}{2.5ex}\:}
\newcommand{\EG}{\:\rule[-.7ex]{0.6pt}{2.5ex}\:E_{\DAG}\:\rule[-.7ex]{0.6pt}{2.5ex}\:}
\newcommand{\identifying}{identifying\xspace}
\newcommand{\induced}[1]{$#1$-induced\xspace}
\newcommand{\prefix}[1]{$#1$-prefix\xspace}
\newcommand{\clink}[2]{L^{C}_{#1}(#2)\xspace}
\newcommand{\slink}[2]{L^{S}_{#1}(#2)\xspace}
\newcommand{\inc}[2]{\mathcal{I}(#1, #2)\xspace}
\newcommand{\exc}[2]{\mathcal{E}(#1, #2)\xspace}
\newcommand{\NIL}{\texttt{NIL}\xspace}
\newcommand{\redarc}[1]{\textcolor{red}{{\xrightarrow[]{\,{#1}\,}}}}
\newcommand{\bluearc}[1]{\textcolor{blue}{{\xrightarrow[]{\,{#1}\,}}}}
\def\BibTeX{{\rm B\kern-.05em{\sc i\kern-.025em b}\kern-.08em
    T\kern-.1667em\lower.7ex\hbox{E}\kern-.125emX}}
\begin{document}

\title{\textsc{DIST}: Efficient k-Clique Listing via\\ Induced Subgraph Trie}

\author{
\IEEEauthorblockN{Yehyun Nam, Jihoon Jang, Kunsoo Park}
\IEEEauthorblockA{
\textit{Seoul National University}\\
\{yhnam,jhjang,kpark\}@theory.snu.ac.kr
}
\and
\IEEEauthorblockN{Jianye Yang}
\IEEEauthorblockA{
\textit{Guangzhou University}\\
\hspace{0.5em}jyyang@gzhu.edu.cn
}
\and
\IEEEauthorblockN{Cheng Long}
\IEEEauthorblockA{
\textit{Nanyang Technological University}\\
c.long@ntu.edu.sg
}
}

\maketitle

\begin{abstract}
Listing $\bm{k}$-cliques plays a fundamental role in various data mining tasks, 
such as community detection and mining of cohesive substructures.
Existing algorithms for the $\bm{k}$-clique listing problem are built upon a general framework, which finds $\bm{k}$-cliques by recursively finding $\bm{(\!k\!-\!1\!)}$-cliques within subgraphs induced by the out-neighbors of each vertex.
However, this framework has inherent inefficiency of finding smaller cliques within certain subgraphs repeatedly.
In this paper, we propose an algorithm \DIST for the $\bm{k}$-clique listing problem.
In contrast to existing works, the main idea in our approach is to compute each clique in the given graph \emph{only once} and store it into a data structure called \emph{Induced Subgraph Trie},
which allows us to retrieve the cliques efficiently.
Furthermore, we propose a method to prune search space based on a novel concept called \emph{soft embedding of an $\bm{l}$-tree},
which further improves the running time.
We show the superiority of our approach in terms of time and space usage through comprehensive experiments conducted on real-world networks;
\DIST outperforms the state-of-the-art algorithm by up to two orders of magnitude in both single-threaded and parallel experiments.
\end{abstract}

\begin{IEEEkeywords}
Graph algorithms
\end{IEEEkeywords}

\section{Introduction}
Graphs have been widely used to model relationships between entities in various domains,
such as social network analysis, bio-informatics, chemistry, and software engineering.
Mining cohesive subgraphs is a fundamental problem in analysis of real-world graphs,
such as community detection \cite{dourisboure2007extraction, kumar1999trawling, sozio2010community, chen2010dense, fang2022densest}, real-time story identification \cite{angel2012dense, letsios2016finding}, biological networks \cite{fratkin2006motifcut, saha2010dense}, and spam detection \cite{gibson2005discovering}.
Mining cohesive subgraphs is also used for building indexes for reachability and distance queries in graph databases \cite{cohen2003reachability, jin20093}, or compressing web graphs \cite{buehrer2008scalable}.
The most straightforward examples of cohesive subgraphs are cliques~\cite{clique-1, clique-2} and their variants, such as quasi-cliques~\cite{quasi-clique} and defective cliques~\cite{defective-clique-1, defective-clique-2}.
Additionally, there are many concepts of cohesive subgraphs based on cliques, including $k$-clique communities~\cite{palla2005uncovering}, $k$-clique densest subgraphs~\cite{tsourakakis2015k}, and nucleus decompositions~\cite{sariyuce2017nucleus}.

\begin{figure}
    \centering
    \includegraphics[scale=0.40, trim=0mm 14mm 0mm 10mm, clip=true]{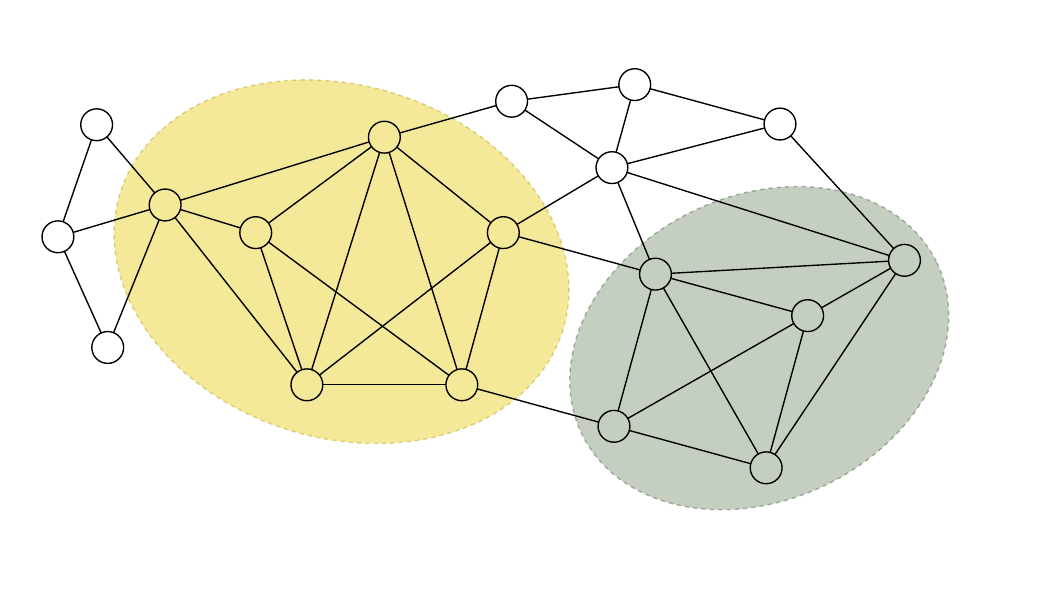}
    \caption{Two $k$-clique communities at $k\hspace{-0.1em}=\hspace{-0.1em}4$. A $k$-clique community is a union of $k$-cliques adjacent to each other, where adjacency means sharing $k\hspace{-0.1em}-\hspace{-0.1em}1$ vertices.}
    \label{fig:community}
\end{figure}

\begin{figure}[t]
    \centering
    \begin{subfigure}{0.49\linewidth}
        \includegraphics[scale=0.43, trim=11.5mm 0mm 12mm 4mm, clip=true]{./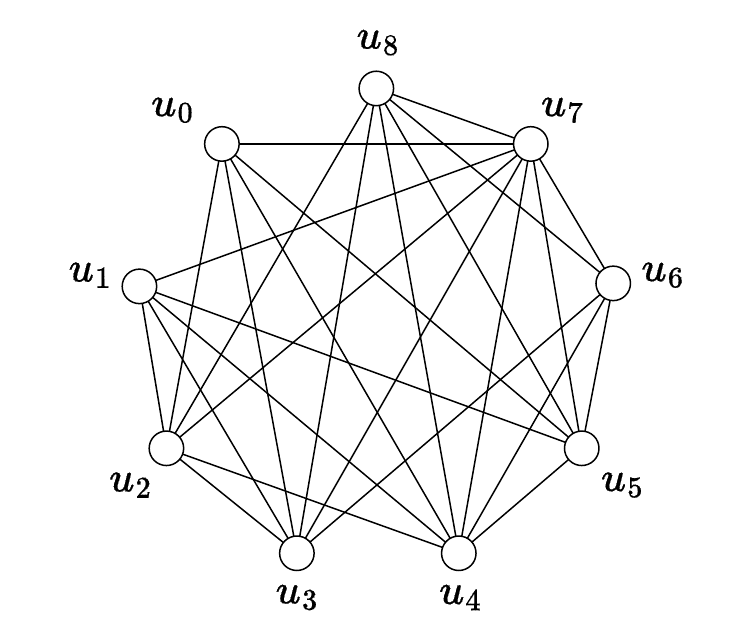}
        \caption{$G$}
        \label{fig:G}
    \end{subfigure}\hfill
    \begin{subfigure}{0.49\linewidth}
        \hspace*{2mm}
        \includegraphics[scale=0.43, trim=11.5mm 1mm 12mm 4mm, clip=true]{./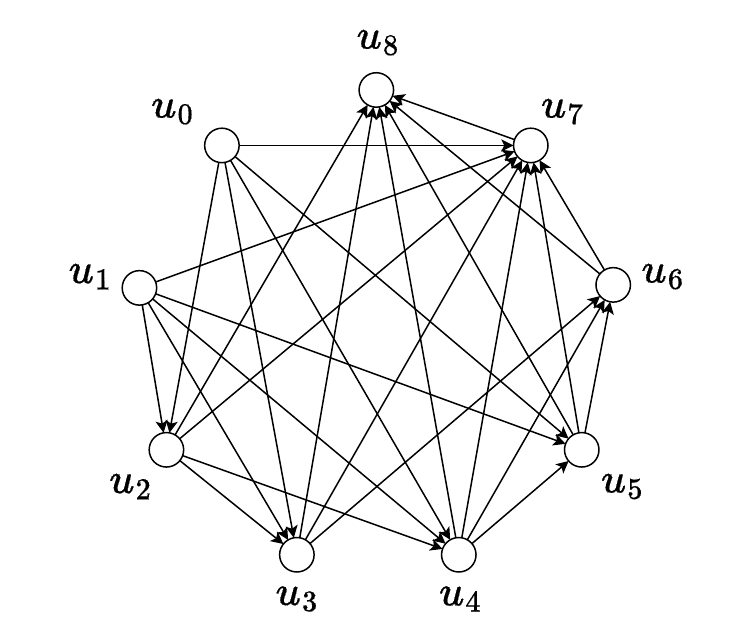}
        \caption{$\DAG$}
        \label{fig:DAG}
    \end{subfigure}
    \caption{Graph $G$ and DAG $\DAG$ based on degeneracy ordering.}
    \label{fig:graphs}
\end{figure}

\noindent
\textbf{Listing $\bm{k}$-cliques.} A set of vertices in a graph is called a \emph{clique} if there is an edge between every pair of distinct vertices,
that is, cliques are the densest subgraphs.
Listing all $k$-cliques (i.e., cliques of size $k$) plays an important role in various data mining tasks.

Algorithms for listing all $k$-cliques are used to detect communities within real-world networks \cite{hui2008human, palla2005uncovering, thilakarathna2013mobile}.
A $k$-clique community \cite{palla2005uncovering} is a union of $k$-cliques adjacent to each other, and algorithms for listing $k$-cliques are used to detect $k$-clique communities (see Figure~\ref{fig:community}).
Palla et al. \cite{palla2005uncovering} proposed a method to analyze statistical features of real-world networks based on $k$-clique communities.
Thilakarathna et al. \cite{thilakarathna2013mobile} proposed a content-replication algorithm to maximize the content dissemination within a limited number of replications, using $k$-clique community detection as a subroutine.
Sariy{\"u}ce et al. \cite{sariyuce2017nucleus} proposed a concept of nucleus decomposition to reveal a hierarchy of dense subgraphs, in which listing of all $k$-cliques is required.
Algorithms for $k$-clique listing are also used for the $k$-clique densest subgraph problem, introduced by Tsourakakis \cite{tsourakakis2015k} as a generalization of the densest subgraph problem.

Listing $k$-cliques is computationally hard; the decision version of the $k$-clique listing problem, i.e., to answer whether a graph includes a $k$-clique, is one of Karp's 21 NP-complete problems \cite{karp1972reducibility}.

\noindent
\textbf{Existing approaches and limitations.} The first practical algorithm for listing $k$-cliques was introduced by Chiba and Nishizeki \cite{chiba1985arboricity}.
Danisch et al. \cite{kClist} later improved this algorithm to operate on a directed version of the graph which is constructed based on a total ordering of vertices.
Their algorithm finds $k$-cliques by recursively finding $(k\hspace{-0.2em}-\hspace{-0.2em}1)$-cliques within subgraphs induced by the out-neighbors of each vertex.
This recursive process continues until it reaches the trivial task of finding $2$-cliques, i.e., edges.
Subsequent algorithms \cite{OrderingHeuristics, SetIntersectionSpeedup} have built upon this framework of Danisch et al.,
and recently, Wang et al. \cite{EBBkC} introduced an edge-oriented branching framework, which finds $k$-cliques by recursively finding $(k\!-\!2)$-cliques within subgraphs induced by the common out-neighbors of each edge.

However,
none have fully addressed the inherent inefficiency of this framework.
Specifically, the recursive nature of this framework leads to redundant computations, 
as the same smaller cliques within certain subgraphs are computed repeatedly.
Consider the task of finding 4-cliques in graph $\DAG$ of Figure~\ref{fig:DAG}.
To find all 4-cliques containing $u_0$, the recursive framework solves the subtask of finding all 3-cliques within the subgraph induced by $\{u_2, u_3, u_4, u_5, u_7\}$, i.e., the out-neighbors of $u_0$.
Since the out-neighbors of $u_1$ is the same as the out-neighbors of $u_0$, the same subtask arises when finding all 4-cliques containing $u_1$; the existing works based on this framework recompute all 3-cliques in the subgraph induced by $\{u_2, u_3, u_4, u_5, u_7\}$ from scratch.

\noindent
\textbf{Contributions.} 
We address the inherent inefficiency in the recursive framework for the $k$-clique listing problem, i.e., computing the same smaller cliques within certain subgraphs repeatedly.
In contrast to existing works, the main idea~in~our approach is to compute each clique in the given graph \emph{only once} and store it into a novel data structure called \emph{Induced Subgraph Trie},
which allows us to retrieve the cliques efficiently.
This algorithmic technique of storing items into a data structure and later retrieving them efficiently from the data structure is called \emph{memoization}~\cite{CLRS}, which has been used in the context of dynamic programming.
To the best of our~knowledge, our work is the first to utilize memoization for the~$k$-clique listing problem.
In dynamic programming,~applying~memoization is straightforward; simple values are stored into a table, and the values are retrieved from the table by a table lookup.
In $k$-clique listing, however, it is not obvious how to store the cliques in a compact way and how to retrieve them efficiently, since cliques in a graph may be intertwined with each other.

The contributions of our work are as follows.
\begin{itemize}[leftmargin=*]
    \item 
    We propose an innovative data structure called \emph{Induced Subgraph Trie} with additional attributes (called \emph{$l$-child links} and \emph{$l$-sibling links}), into which the cliques in the given graph $G$ are stored.
    By building a trie \cite{TAOCP3} with induced subgraphs of graph $G$ (nodes and black edges in Figure~\ref{fig:trie}) and then adding $l$-child links (red arcs in Figure~\ref{fig:trie}) and $l$-sibling links (blue arcs in Figure~\ref{fig:trie}), we can store the cliques of graph $G$ in a compact way.
    The cliques in $G$ are divided into two sets based on whether the minimum vertex in $G$ (with respect to a total ordering on $V_G$) is contained in a clique or not.
    By following $l$-child links (resp.~$l$-sibling links) in the Induced Subgraph Trie, we can efficiently retrieve the cliques that contain the minimum vertex (resp.~those that do not).
    \item 
    We introduce a novel concept called \emph{soft embedding of an $l$-tree}, which is a necessary condition for an existence of an $l$-clique. We propose a method to prune search space based on this concept.
    If our method finds that there is no soft embedding of an $l$-tree in a subgraph, we can safely prune the search space
    for finding $l$-cliques within the subgraph.
    Our pruning method further improves the running time, especially for large $k$ values.
    \item
    We show the superiority of our approach through comprehensive experiments conducted on 16 real-world networks.
    Experimental results show that
    our algorithm \DIST (\textsf{\textbf{D}ense \textbf{I}nduced \textbf{S}ubgraph \textbf{T}rie}) outperforms the state-of-the-art $k$-clique listing algorithm (\EBBkC in \cite{EBBkC})
    in both time and space usage.
    Specifically, \DIST solves $k$-clique listing faster than the state-of-the-art algorithm by up to two orders of magnitude in both single-threaded and parallel experiments.
\end{itemize}

\section{Preliminaries}
\subsection{Problem Statement}
\noindent
A \emph{graph} $G$ is a pair $(V_G, E_G)$ where $V_G$ is the set of vertices and $E_G\hspace{-0.1em} \subseteq\hspace{-0.1em} \big\{ \{u, v\}\hspace{-0.1em} \mid\hspace{-0.1em} u, v \in V_G \text{ and } u \neq v \big\}$ is the set of edges.
We say that vertices $u$ and $v$ in a graph $G$ are \emph{adjacent} if $\{u, v\} \in E_G$.
A \emph{clique} in a graph $G$ is a set of vertices $C \subseteq V_G$ such that every pair of distinct vertices in $C$ are adjacent in $G$.
The \emph{density} of a graph $G$ is defined as $|E_G|/\binom{|V_G|}{2}$.
A \emph{directed graph} $D$ is a pair $(V_{D}, E_{D})$ where $V_{D}$ is the set of vertices and $E_{D} \subseteq \big\{ (u, v) \mid u, v \in V_{D} \text{ and } u \neq v \big\}$ is the set of directed edges.
We say that a vertex $v$ is an \emph{out-neighbor} of a vertex $u$ if $(u, v) \in E_{D}$.
For a vertex $u$, we denote by $N_{D}(u)$ the set of out-neighbors of $u$ in $D$, and by $d_{D}(u)$ the number of out-neighbors of $u$ (i.e., the out-degree of $u$) in $D$.
For a directed graph $D$ and a set of vertices $S \subseteq V_D$, the subgraph of $D$ induced by $S$, denoted by $D[S]$, is a directed graph whose vertex set is $S$ and edge set consists of all the directed edges in $D$ with both endpoints in $S$.
A set of vertices $C$ of a directed graph $D$ is a clique if $C$ is a clique in the underlying undirected graph of $D$.

For a graph $G$ and a total ordering $\eta$ on $V_G$, the directed acyclic graph (DAG) of $G$ based on $\eta$ is a directed graph $\vec{G}$ such that 
$V_{\vec{G}}\hspace{-0.2em} =\hspace{-0.2em} V_G$ and $E_{\vec{G}}\hspace{-0.2em} =\hspace{-0.2em} \big\{ (u, v) \hspace{-0.2em}\mid\hspace{-0.2em} \{u, v\}\hspace{-0.1em} \in\hspace{-0.1em} E_G \mbox{\,and\,} u\hspace{-0.1em} \prec_{\eta}\hspace{-0.1em} v \big\}$.
Figure~\ref{fig:graphs} shows a graph $G$ and the DAG $\DAG$ based on the total ordering $u_0\hspace{-0.2em} \prec\hspace{-0.2em} u_1\hspace{-0.2em} \prec u_2\hspace{-0.2em} \prec\hspace{-0.2em} \cdots\hspace{-0.2em} \prec u_8$. 
We denote by $\DAG \setminus u$ the subgraph of $\DAG$ obtained by removing from $\DAG$ the vertex $u$ and its incident edges.
We denote by $\DAG_u$ the subgraph of $\DAG$ induced by the out-neighbors of a vertex $u$.
Throughout this paper, for a DAG $\DAG$ based on $\eta$, any subgraph of $\DAG$ is considered to be based on the restriction of $\eta$ to its vertex set.

Given a DAG $\DAG$ based on a total ordering $\eta$, the \emph{identifying string} of $\DAG$ is a string, denoted by $S_{\DAG}$,
which is obtained by concatenating the vertices of $\DAG$ in descending order with respect to $\eta$.
For example, in Figure~\ref{fig:DAG}, the identifying string of $\DAG_{u_1}$ is $u_7 u_5 u_4 u_3 u_2$.

The degeneracy \cite{kClist} of a graph $G$, denoted by $\delta(G)$, is the maximum integer $\delta$ such that there exists an induced~subgraph~of~$G$ with minimum degree $\delta$.
The degeneracy ordering~\cite{kClist} is a total ordering on the vertex set, obtained by removing a vertex with the minimum degree repeatedly until the graph becomes empty.

Table~\ref{tab:notations} lists the frequently used notations.
Whenever unambiguous, the notations are used without subscripts.
\begin{table}[b]
\setlength{\tabcolsep}{4pt}
  \caption{Frequently used notations.}
  \label{tab:notations}
  \small
  \begin{tabular}{cl}
    \toprule
    \textbf{Notation}      & \textbf{Meaning} \\
    \midrule
    $\delta(G)$                     & degeneracy of a graph $G$ \\*[0.5mm]
    $\omega(G)$                     & size of a maximum clique in a graph $G$ \\*[0.5mm]
    $d_{\DAG}(u)$ or $\Deg{u}$      & out-degree of a vertex $u$ in a DAG $\DAG$ \\*[0.5mm]
    $N_{\DAG}(u)$ or $N(u)$         & out-neighbors of a vertex $u$ in a DAG $\DAG$ \\*[0.5mm]
    $\DAG_u$                        & subgraph of a DAG $\DAG$ induced by $N_{\DAG}(u)$ \\*[0.5mm]
    $S_{\DAG}$                      & identifying string of a DAG $\DAG$ \\*[0.5mm]
    $\clink{t}{l}$ and $\slink{t}{l}$                  & $l$-child link and $l$-sibling link of a node $t$ \\*[0.5mm]
    $\DAG[t]$                       & subgraph of a DAG $\DAG$ corresponding a node $t$ \\
  \bottomrule
\end{tabular}
\end{table}
\begin{definition}[$k$-clique listing]
Given a graph $G$ and an integer $k$, the \emph{$k$-clique listing} problem is to list all the $k$-cliques in $G$.
\end{definition}
\noindent
For example, for the graph $G$ in Figure~\ref{fig:G}~and~$k~\hspace{-0.2em}=~\hspace{-0.2em}4$,~there~are $14$ $k$-cliques in $G$, i.e.,
$\{ u_0, u_2, u_3, u_7 \}$,
$\{ u_0, u_2, u_4, u_7 \}$, 
$\ldots$.

\noindent
\textbf{General framework.} Existing approaches~\cite{kClist, OrderingHeuristics, SetIntersectionSpeedup} to the $k$-clique listing problem follow the general framework proposed by Danisch et al., described in Algorithm~\ref{alg:framework}.
Given a graph $G$, the algorithm first computes a DAG $\DAG$ based~on~a total ordering of the vertex set (line~\ref{line:ordering}).
Then, it invokes \textsf{Listing}$(\DAG, k, \varnothing)$ (line~\ref{line:Listing}).
The function \textsf{Listing}$(\DAG, l, C)$, where~$\DAG$ is the subgraph of the original graph induced by the common neighbors of a clique $C$ and $l\hspace{-0.2em} =\hspace{-0.2em} k - |C|$,
is a recursive procedure to compute all the $l$-cliques within $\DAG$.
Combining the $l$-cliques with the clique $C$, it reports $k$-cliques in the original graph.
Throughout the procedure, the set $C$ is always a clique in the original graph.
In the base case, when $l\hspace{-0.1em}=\hspace{-0.1em}2$, the function \textsf{Listing}($\DAG, l, C$) iterates through the edge set $E_{\DAG}$ and reports each edge along with the set $C$ as a $k$-clique in the original graph (lines~4--6).
Otherwise, it invokes~a~recursive~call \textsf{Listing}($\DAG_u, l-1, C \cup \{u\})$ for each vertex $u$ in $\DAG$ (lines~7--9).
\subsection{Related Works}
\noindent
\textbf{$\bm{k}$-clique listing.}
Danisch et al. \cite{kClist} proposed the ordering-based framework for the $k$-clique listing problem, detailed in Section~II-A.
To construct a DAG of the input graph,~their algorithm \textsf{kClist} uses the degeneracy ordering.
They showed that their algorithm runs in $\mathcal{O}\big(k\cdot m\cdot (\delta(G)/2)^{k-2} \big)$ time, where $\delta(G)$ is the degeneracy of the input graph $G$.
They also proposed two strategies, namely \textsf{NodeParallel} and \textsf{EdgeParallel}, for parallelizing the ordering-based framework.
Li~et~al.~\cite{OrderingHeuristics} proposed two algorithms for the $k$-clique listing problem, namely \DDegree and \DDegCol, which use degree-ordering and color-ordering, respectively.
\DDegree initially generates a DAG based on the degeneracy ordering, but within subgraphs at the first level of the recursion, it reorders the vertices according to their degrees.
Similarly, \DDegCol applies greedy coloring within those subgraphs and reorders the vertices according to their color values.
The reordering enables pruning of the search space based on degree and color value, respectively.
Yuan~et~al.~\cite{SetIntersectionSpeedup} proposed two algorithms for the $k$-clique listing problem, namely \SDegree and \BitCol, along with preprocessing techniques for the input graph.
They focused mainly on speeding up set intersections for computing subgraphs induced by neighborhoods.
Recently, Wang et al. \cite{EBBkC} proposed an edge-oriented branching framework, in which the listing process invokes a recursive call for each edge, rather than for each vertex.
They also introduced an early termination technique to speed up the listing process within $t$-plexes \cite{plex}.

\noindent
\textbf{$\bm{k}$-clique counting.}
There has been a lot of research focused on counting or estimating the number of $k$-cliques.
Jain and Seshadhri \cite{jain2017fast} proposed a heuristic called \textsf{Tur{\'a}n-shadow}, which estimates the number of $k$-cliques based on Tur{\'a}n's theorem \cite{turan1941external}.
In a separate work, Jain and Seshadhri \cite{jain2020power} developed \textsf{Pivoter}, an algorithm that exactly counts $k$-cliques without actually enumerating all the cliques.
Recently, Ye et al. \cite{ye2022lightning} presented \textsf{DPColorPath}, a method that exactly counts the number of $k$-cliques in sparse regions of the graph while approximating the counts for dense regions of the graph via sampling. 
Note that methods for counting or estimating the number of $k$-cliques cannot be applied to listing $k$-cliques directly, since they either combinatorially count $k$-cliques without enumeration or samples only a portion of $k$-cliques.

\begin{algorithm}[t]
    \caption{The general framework for the $k$-clique listing problem \cite{kClist}}
    \label{alg:framework}
    \DontPrintSemicolon
    \small

    \SetKwBlock{Begin}{function}{end function}

    \SetKwFunction{Listing}{\textsf{Listing}}

    \SetKwInOut{Input}{Input}
    \SetKwInOut{Output}{Output}

    \Input{\enspace a graph $G$ and an integer $k$}
    \Output{\enspace all the $k$-cliques in $G$}
    \smallskip
    Compute a DAG $\vec{G}$ of $G$ based on total ordering of vertices.\;\label{line:ordering}
    Invoke \Listing{$\vec{G}$, $k$, $\varnothing$}.\;\label{line:Listing}
    \Begin(\Listing{$\vec{G}$, $l$, $C$}){
        \eIf{$l = 2$}{
            \ForEach{$(u, v) \in E_{\vec{G}}$}{
                Report $C \cup \{ u, v \}$ as a $k$-clique.\;\label{line:basecase}
            }
        }{
            \ForEach{$u \in V_{\vec{G}}$}{
                \Listing{$\vec{G}_u$, $l-1$, $C \cup \{ u \}$}\;\label{line:recursion}
            }
        }
    }
\end{algorithm}

\section{Overview of Algorithm}
\noindent
In this section, we present an overview of our algorithm for the $k$-clique listing problem (Algorithm~\ref{alg:listing}).
Our algorithm follows the general framework for the $k$-clique listing problem,
but it is different from the existing approaches~\cite{kClist, OrderingHeuristics, SetIntersectionSpeedup} in that
it finds every $l$-clique for $l \le k$ in the given graph $G$ only once,
while existing approaches may find an $l$-clique for $l < k$ many times.
Once our algorithm finds an $l$-clique for $l < k$, it memoizes (i.e., stores) the $l$-clique
so that it can be subsequently retrieved without repeating the same computation
(this technique is called \emph{memoization} in~\cite{CLRS}).
Since cliques within graph $G$ may be intertwined with each other, it is not obvious how to memoize the cliques in a compact way.
We propose a novel data structure, called \emph{Induced Subgraph Trie}, into which the cliques in graph $G$ are stored in a compact way and can be retrieved efficiently.
\begin{example}
Consider the task of listing all the $4$-cliques in the DAG $\DAG$ in Figure~\ref{fig:DAG}.
When we choose the vertex $u_0$ at depth 1 of the recursion, i.e., $C= \{ u_0 \}$, to report every $4$-cliques containing $u_0$, the remaining task is to compute all the $3$-cliques within the subgraph $\DAG\big[\{ u_2, u_3, u_4, u_5, u_7 \}\big]$.
The same task arises when we choose the vertex $u_1$ at the highest level of the recursion, i.e., $C = \{ u_1 \}$.
Instead of recomputing this task, our algorithm efficiently obtains all the $3$-cliques within the subgraph $\DAG \big[ \{ u_2, u_3, u_4, u_5, u_7 \} \big]$ using the Induced Subgraph Trie.
\end{example}
Additionally, we propose a pruning method based on a novel concept called \emph{soft embedding of an $l$-tree}, 
to reduce the search space of the Induced Subgraph Trie (Section~\ref{sec:pruning}).

Algorithm~\ref{alg:listing} shows \DIST, our solution for the $k$-clique listing problem.
\DIST differs from the general framework in that when the subgraph for the next recursive call is dense,
it invokes \textsf{ListingDense}, a routine for listing cliques in dense subgraphs using the Induced Subgraph Trie, instead of making another recursive call to itself (lines~12--15).
This strategy is taken because the computational cost of listing cliques within dense subgraphs is high, making it is more beneficial to memoize the cliques.
This distinct behavior leads to the name \textsf{\textbf{D}ense \textbf{I}nduced \textbf{S}ubgraph \textbf{T}rie}.
The criteria for determining whether we invoke \textsf{ListingDense} will be detailed in Section~\ref{sec:trie}.
\begin{algorithm}[t]
    \caption{\DIST:\hspace{0.2em}Our algorithm for listing $k$-cliques}
    \label{alg:listing}
    \DontPrintSemicolon
    \small

    \SetKwBlock{Begin}{function}{end function}

    \SetKwFunction{Listing}{\textsf{Listing}}
    \SetKwFunction{ListingDense}{\textsf{ListingDense}}

    \SetKw{continue}{continue}
    \SetKw{NIL}{NIL}
    
    \SetKwInOut{Input}{Input}
    \SetKwInOut{Output}{Output}

    \Input{a graph $G$ and an integer $k$}
    \Output{all the $k$-cliques in $G$}
    \smallskip
    Compute a DAG $\vec{G}$ of $G$ based on the degeneracy ordering.\;
    Let $T$ be an empty Induced Subgraph Trie.\;
    Invoke \Listing{$\vec{G}$, $k$, $\varnothing$, $T$}.\;
    \Begin(\Listing{$\vec{G}$, $l$, $C$, $T$}){
        \eIf{$l = 2$}{
            \ForEach{$(u, v) \in E_{\vec{G}}$}{
                Report $C \cup \{ u, v \}$ as a $k$-clique.
            }
        }{
            \ForEach{$u \in V_{\vec{G}}$}{
                \If{\rm $\DAG_u$ can be pruned}{
                    \continue
                }
                \eIf{\rm $\vec{G}_u$ is dense }{
                    \ListingDense{$\vec{G}_u$, $l-1$, $C \cup \{ u \}$, $T$, \NIL}\;
                }{
                    \Listing{$\vec{G}_u$, $l-1$, $C \cup \{ u \}$, $T$}\;
                }
            }
        }
    }
\end{algorithm}

\section{Induced Subgraph Trie}\label{sec:trie}
\noindent
In this section, we propose a novel data structure and an efficient algorithm for listing cliques in a dense region of a graph using the data structure.

\noindent
\textbf{The data structure.}
We first describe how to build a trie \cite{TAOCP3} with induced subgraphs of graph $G$ (nodes and black edges in Figure~\ref{fig:trie}) and then describe how to add $l$-child links (red arcs in Figure~\ref{fig:trie}) and $l$-sibling links (blue arcs in Figure~\ref{fig:trie}) to the trie.
We begin by introducing the concept of $d$-induced subgraphs.

\begin{definition}
    Let $\DAG$ be a DAG and $C$ be a $d$-clique in $\DAG$.
    An induced subgraph $\DAG'$ of $\DAG$ whose vertex set is the set of common neighbors of the vertices in $C$, i.e.,
    \begin{equation*}
        V_{\DAG'} = \bigcap_{u \in C} \Nbr{u}.
    \end{equation*}
    is called \emph{\induced{d}} with respect to $C$. Any (possibly non-proper) induced subgraph $\DAG''$ of $\DAG'$ such that $S_{\DAG''}$ is a prefix of $S_{\DAG'}$ is called \emph{\prefix{d}} with respect to $C$. 
\end{definition}

\begin{example}
In Figure~\ref{fig:DAG}, the subgraph of $\DAG$ induced by the set $\{ u_2, u_3, u_4, u_5, u_7 \}$ is a \induced{1} subgraph with respect to $\{ u_0 \}$.
Additionally, the subgraph of $\DAG$ induced by the set $\{ u_6, u_7, u_8 \}$ is a \induced{1} subgraph with respect to $\{ u_3 \}$ and a \induced{2} subgraph with respect to $\{ u_4, u_5 \}$ simultaneously.
\end{example}

\begin{figure}[t]
    \centering
    \includegraphics[width=\columnwidth, trim=0mm 3mm 0mm 3mm, clip=true]{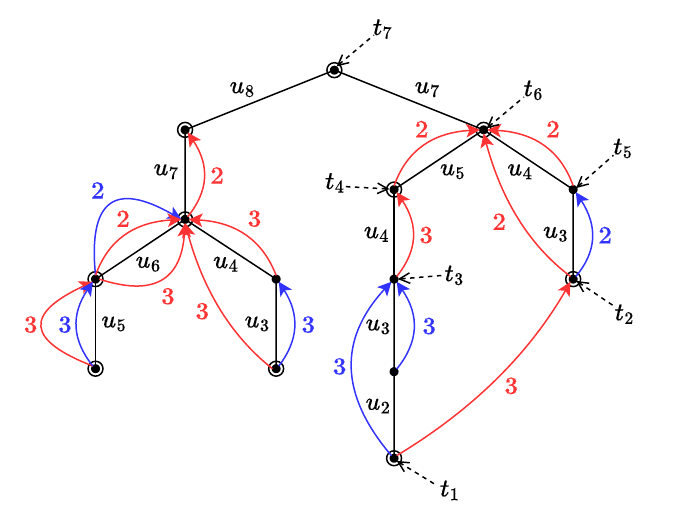}
    \caption{Induced Subgraph Trie for the DAG $\DAG$ in Figure~\ref{fig:DAG}. Red arc (resp. blue arc) labeled $l$ outgoing from a node $t$ is the $l$-child link (resp. $l$-sibling link) of $t$.}
    \label{fig:trie}
\end{figure}

\begin{table}[t]
    \centering
    \small
    \caption{Identifying string $S_{\DAG'}$ of \induced{d} subgraph $\DAG'$ and one $d$-clique  for each $d$ with respect to which $\DAG'$ is \induced{d}.}
    \begin{tabular}{c|c}
        \toprule
        $S_{\DAG'}$ & $d$-cliques \\
        \midrule
        $ u_{7} u_{5} u_{4} u_{3} u_{2} $   & $\{ u_{0} \} $ \\
        $ u_{8} u_{7} u_{4} u_{3} $         & $\{ u_{2} \} $ \\
        $ u_{8} u_{7} u_{6} $               & $\{ u_{3} \}, \{ u_{4}, u_{5} \} $ \\
        $ u_{8} u_{7} u_{6} u_{5} $         & $\{ u_{4} \} $ \\
        $ u_{8} u_{7} $                     & $\{ u_{6} \}, \{ u_{2}, u_{3} \}, \{ u_{4}, u_{5}, u_{6} \} $ \\
        $ u_{8} $                           & $\{ u_{7} \}, \{ u_{2}, u_{7} \}, \{ u_{2}, u_{3}, u_{7} \} $ \\
        $\varepsilon$                       & $\{ u_{8} \}, \{ u_{0}, u_{7} \}, \{ u_{0}, u_{2}, u_{7} \} $ \\
        $ u_{7} u_{4} u_{3} $               & $\{ u_{0}, u_{2} \} $ \\
        $ u_{7} $                           & $\{ u_{0}, u_{3} \}, \{ u_{0}, u_{2}, u_{3} \} $ \\
        $ u_{7} u_{5} $                     & $\{ u_{0}, u_{4} \} $ \\
        \bottomrule
    \end{tabular}
    \label{tab:strings}
\end{table}
Table~\ref{tab:strings} shows the identifying string of each subgraph $\DAG'$ that is \induced{d} for some $1\hspace{-0.2em}\le\hspace{-0.2em}d\hspace{-0.2em}<\hspace{-0.2em}4$,
along with $d$-cliques with respect to which $\DAG'$ is \induced{d}.
Note that the subgraph $\DAG\big[\{ u_2, u_3, u_4, u_5, u_7 \}\big]$ is \induced{1} with respect to both $\{ u_0 \}$ and $\{ u_1 \}$.
However, $\{ u_1 \}$ is omitted in Table~\ref{tab:strings} due to space limitations, along with some cliques for the other subgraphs. 

For a DAG $\DAG$ and an integer $k$, an \emph{Induced Subgraph Trie} is a trie~\cite{TAOCP3} (also known as prefix tree) for the set of \identifying strings of all the subgraphs that are \induced{d} with respect to $d$-cliques for $1\hspace{-0.2em}\le\hspace{-0.2em}d\hspace{-0.2em}<\hspace{-0.2em}k$.
For example, Figure~\ref{fig:trie} shows an Induced Subgraph Trie for the DAG $\DAG$ in Figure~\ref{fig:DAG} with $k\hspace{-0.1em}=\hspace{-0.1em}4$.
Each black edge (say, from node $s$ to node $t$) is labeled with a vertex $u$ in $\DAG$, which is also the label of the child node $t$, i.e., $u = \ell(t)$.
For example, the edge between $t_7$ and $t_6$ is labeled with vertex $u_7$, and $\ell(t_6) = u_7$.
Each node $t$ is associated with a string, denoted by $r(t)$, which is obtained by concatenating the labels on the path from the root to $t$.
For a node $t$, we denote by $\DAG[t]$ the induced subgraph of $\DAG$ whose identifying string is $r(t)$,
and we say that $t$ is the node corresponding to $\DAG[t]$.
Each leaf node corresponds to some \induced{d} subgraph of $\DAG$, while some internal nodes do not, i.e., they correspond to \prefix{d} subgraphs of $\DAG$.
A node that corresponds to a \induced{d} subgraph is circled.
Therefore, the tree in Figure~\ref{fig:trie} (nodes and black edges) is the trie for the set of identifying strings in Table~\ref{tab:strings}.

We now describe how to define $l$-child links and $l$-sibling links and how to add them to the trie.
Suppose we are to find $l$-cliques within a DAG $\DAG'$, where $l\hspace{-0.2em}=\hspace{-0.2em}k\hspace{-0.09em}-\hspace{-0.09em}d$ and $\DAG'$ is \induced{d} or \prefix{d} with respect to $C$.
Let $u$ be the minimum vertex in $\DAG'$,
i.e., the label of the node corresponding to $\DAG'$ in the Induced Subgraph Trie.
Then, the set of $l$-cliques in $\DAG'$ is divided into two sets as follows.
\begin{enumerate}[leftmargin=*]
    \item \emph{Inclusive set} $\mathcal{I}(\DAG', l)$ consists of the $l$-cliques in $\DAG'$ containing $u$, or equivalently,
    the $(l\hspace{-0.3em}-\hspace{-0.3em}1\hspace{-0.1mm})$-cliques in $\DAG'_u$ along with the vertex $u$.
    \item \emph{Exclusive set} $\mathcal{E}(\DAG', l)$ consists of the $l$-cliques in $\DAG'$ not containing $u$, or equivalently, the $l$-cliques in $\DAG' \setminus u$.
\end{enumerate}
\begin{example}
Consider the process of listing all the $4$-cliques in $\DAG$ in Figure~\ref{fig:DAG} that contain $\{ u_0 \}$.
Let $\DAG'\hspace{-0.1em}=\hspace{-0.1em}\DAG\big[\{ u_2, u_3, u_4,\allowbreak u_5, u_7 \} \big]$, which is \induced{1} with respect to $\{ u_0 \}$.
There are three $3$-cliques in $\DAG'$, namely $\{ u_2, u_3, u_7 \}$, $\{ u_2, u_4, u _7 \}$, and $\{ u_4, u_5, u_7 \}$.
The 3-cliques containing $u_2$, i.e., $\mathcal{I}(G', 3) = \big\{ \{ u_2, u_3, u_7 \}$, $\{ u_2, u_4, u _7 \} \big\}$, can be obtained by adding $u_2$ to each of $2$-cliques in $\DAG'_{u_2} = \DAG\big[\{ u_3, u_4,\allowbreak u_7 \}\big]$, namely, $\{ u_3, u_7 \}$ and $\{ u_4, u_7 \}$.
On the other hand, $\mathcal{E}(G', 3)=\big\{\{ u_4, u_5, u_7 \}\big\}$, where $\{ u_4, u_5, u_7 \}$ is the only $3$-clique in $\DAG\big[ \{ u_3, u_4,\allowbreak u_5, u_7 \} \big]$ (not containing $u_2$).
\end{example}
This process can be done recursively to compute each of the inclusive sets and the exclusive sets.
The Induced Subgraph Trie contains all the nodes that correspond to these subgraphs $\DAG'_u$ and $\DAG' \setminus u$;
\begin{enumerate}[leftmargin=*]
    \item $\DAG'_u$ is a \induced{(d\hspace{-0.2em}+\hspace{-0.2em}1)} subgraph with respect to the set $C \cup \{ u \}$ since $\DAG'$ is a \induced{d} or \prefix{d} subgraph, and
    \item $\DAG' \setminus u$ is a \induced{d} or \prefix{d} subgraph with respect to the set $C$ since $S_{\DAG' \setminus u}$ is a prefix of $S_{\DAG'}$, given that $u$ is the minimum vertex in $\DAG'$.
\end{enumerate}
When listing the cliques within some subgraph for the first time, we find them using the previous recursive process.
During the computation, we memoize the cliques within the Induced Subgraph Trie,
so that
they can be retrieved without  repeating the same computation.
To retrieve the inclusive set and the exclusive set efficiently, we introduce
two types of links to the Induced Subgraph Trie: \emph{$l$-child links} are used for the inclusive set and \emph{$l$-sibling links} are used for the exclusive set.
Cliques are compactly stored in the Induced Subgraph Trie using $l$-sibling links and $l$-child links.

Let $d$ and $l$ be integers such that $1 \le d < k$ and $l = k - d$, and $t$ be a non-root node such that 
$\DAG[t]$ is a \induced{d} or \prefix{d} subgraph with respect to a $d$-clique $C$.
Let $\DAG' = \DAG[t]$ and $u = \ell(t)$, i.e., the minimum vertex in $\DAG'$.
\begin{definition}
The $l$-child link of a node $t$, denoted by $\clink{t}{l}$, is a link towards another node $t_c$,
where $t_c$ is the node on the path from the root to the node corresponding to $\DAG'_u$ such that $(l-1)$-cliques in $\DAG[t_c]$ along with the vertex $u$ forms the inclusive set $\mathcal{I}(\DAG', l)$.
If there are multiple such nodes, $t_c$ is the one nearest to the root among them.
\end{definition}
\begin{definition}
The $l$-sibling link of a node $t$, denoted by $\slink{t}{l}$, is a link towards another node $t_s$,
where $t_s$ is the node on the path from the root to the node corresponding to $\DAG' \setminus u$ such that $\DAG'[t_s]$ includes the exclusive set $\mathcal{E}(\DAG', l)$.
If there are multiple such nodes, $t_s$ is the one nearest to the root among them.
\end{definition}
In Figure~\ref{fig:trie}, the red arc (resp. blue arc) labeled $l$ outgoing from a node $t$ represents the $l$-child link $L^{C}_{t}(l)$ (resp. $l$-sibling link $L^{S}_{t}(l)$).
Note that the $1$-child link of any node is a link towards the root and the $1$-sibling link of any node is a link towards its parent node.
For simplicity, $1$-child links, $1$-sibling links, and links towards the root are omitted in Figure~\ref{fig:trie}.
\begin{example}\label{example:links}
Consider the task of listing all the $4$-cliques in $\DAG$ in Figure~\ref{fig:DAG} that includes $\{ u_0 \}$, when they are already memoized in the Induced Subgraph Trie.
This can be done by retrieving all the $3$-cliques in the subgraph $\DAG' = \DAG\big[\{ u_2, u_3, u_4, u_5, u_7 \}\big]$, which is the \induced{1} subgraph with respect to $\{ u_0 \}$.
In Figure~\ref{fig:trie}, the node $t_1$ corresponds to the subgraph $\DAG'$.
There is a $3$-child link $\clink{t_1}{3}$ (i.e., a red arc labeled $3$) from $t_1$ to node $t_2$.
Note that $\DAG[t_2]$ includes all the $2$-cliques in $\DAG'_{u_2}$, i.e., $\big\{ \{ u_3, u_7 \}, \{ u_4, u_7 \} \big\}$, but $\DAG[t_5]$ does not.
Also, there is a $3$-sibling link $\slink{t_1}{3}$ (i.e., a blue arc labeled $3$) from $t_1$ to node $t_3$.
Again, $\DAG[t_3]$ includes all the $3$-cliques in $\DAG' \setminus u_2$, i.e., $\big\{ \{ u_4, u_5, u_7 \} \big\}$, but $\DAG[t_4]$ does not.

The retrieval of all the $l$-cliques within a \induced{d} or \prefix{d} subgraph $\DAG[t]$ begins by following either the $l$-child link or the $l$-sibling link of $t$.
For the inclusive set $\inc{\DAG[t]}{l}$, the retrieval begins by following the $l$-child link of $t$, i.e., $\clink{t}{l}$.
To retrieve the remaining $l-1$ vertices of the $l$-cliques, it proceeds recursively by following the $(l-1)$-child/sibling links of $\clink{t}{l}$.
In this example, the retrieval of $\inc{\DAG[t_1]}{3} = \big\{ \{ u_2, u_3, u_7\}, \{ u_2, u_4, u_7\} \big\}$ begins by following the $3$-child link $\clink{t_1}{3} = t_2$;
the retrieval of $\{ u_2, u_3, u_7 \}$ proceeds as \textcolor{black}{$t_1 \redarc{3} t_2 \redarc{2} t_6 \redarc{1} t_7$},
and the retrieval of $\{ u_2, u_4, u_7 \}$ as \textcolor{black}{$t_1 \redarc{3} t_2 \bluearc{2} t_5 \redarc{2} t_6 \redarc{1} t_7$}.

The retrieval of the exclusive set $\exc{\DAG[t]}{l}$ starts by following the $l$-sibling link of $t$, i.e., $\slink{t}{l}$.
Since $\exc{\DAG[t]}{l}$ is the same as the set of $l$-cliques in $\DAG[\slink{t}{l}]$, the retrieval proceeds by following the $l$-child/sibling link of $\slink{t}{l}$.
In this example, the retrieval of $\exc{\DAG[t_1]}{3} = \big\{ \{ u_4, u_5, u_7 \} \big\}$ begins by following the $3$-sibling link $\slink{t_1}{3}=t_3$; the retrieval of $\{ u_4, u_5, u_7 \}$ proceeds as \textcolor{black}{$t_1 \bluearc{3} t_3 \redarc{3} t_4 \redarc{2} t_6 \redarc{1} t_7$}.
\end{example}
\begin{algorithm}[t]
    \caption{Listing $k$-cliques for dense subgraphs}
    \label{alg:listingDense}
    \DontPrintSemicolon
    \small
    
    \SetKwInOut{Input}{Input}
    \SetKwInOut{Output}{Output}

    \SetKwBlock{Begin}{function}{end function}

    \SetKwFunction{IsPrunable}{IsPrunable}
    \SetKwFunction{Density}{Density}
    \SetKwFunction{ListingDense}{\textsf{ListingDense}}
    \SetKwFunction{ListingIST}{\textsf{ListingIST}}
    \SetKwFunction{IsPrunable}{IsPrunable}

    \SetKw{continue}{continue}
    
    \Input{a \induced{d} subgraph $\DAG$ with respect to a clique~$C$, an integer $l\hspace{-0.1em}=\hspace{-0.1em}k\hspace{-0.1em}-\hspace{-0.1em}|C|$, an Induced Subgraph Trie $T$}
    \Output{$k$-cliques in the original graph formed by combining the $l$-cliques in $\DAG$ with the set $C$}
    
    \smallskip

    \Begin(\ListingDense{$\DAG$, $l$, $C$, $T$} ){
        Insert the \identifying string of $\DAG$ into $T$ and let $\langle t_0, t_1, t_2, ..., t_{|V_{\DAG}|} \rangle$ be the path from the root to the node corresponding to $\DAG$.\;
        \eIf{$l = 2$}{
            \ForEach{$(u, v) \in E_{\vec{G}}$}{
                Report $C \cup \{ u, v \}$ as a $k$-clique.\;
            }
            \For{\rm $i := 1$ to $|V_{\DAG}|$}{
                $u := \ell(t_i)$\;
                \If{\rm $t_i.\textsf{Marked}[l]$ is false}{
                    Set $\slink{t_i}{l}$.\label{line:update-sibling2}\;
                    \eIf{$\Nbr{u} = \varnothing$}{
                        $\clink{t_i}{l} := $ the root node of $T$\;
                    }{
                        Insert\hspace{0.28em}the\hspace{0.28em}identifying\hspace{0.28em}string\hspace{0.28em}of\hspace{0.28em}$\DAG_{u}$\hspace{0.28em}into\hspace{0.28em}$T$\hspace{0.28em}and\hspace{0.28em}let\hspace{0.28em}$t^*$\hspace{0.28em}be\hspace{0.28em}the\hspace{0.28em}corresponding\hspace{0.28em}node.\;
                        $\clink{t_i}{l} := t^{*}$\label{line:update-child2}\;
                    }
                    $t_i.\textsf{Marked}[l] := \text{true}$\;
                }
            }
        }{
            \For{\rm $i := 1$ to $|V_{\DAG}|$}{
                $u := \ell(t_i)$\;
                \eIf{\rm $t_i.\textsf{Marked}[l]$ is true}{
                    \ListingIST{$\clink{t_i}{l}$, $l\hspace{-0.1em}-\hspace{-0.1em}1$, $C \cup \{ u \}$}\;\label{line:ListingIST}
                }{
                    Set $\slink{t_i}{l}$.\label{line:update-sibling}\;
                    \eIf{\rm $\DAG_u$ can be pruned}{
                        $\clink{t_i}{l} := $ the root node of $T$\;
                    }{
                        \ListingDense{$\DAG_{u}$, $l\hspace{-0.1em}-\hspace{-0.1em}1$, $C \cup \{ u \}$, $T$}\;\label{line:ListingDense}
                        Set $\clink{t_i}{l}$.\label{line:update-child}\;
                    }
                    $t_i.\textsf{Marked}[l] := \text{true}$\;\label{line:mark}
                }
            }
        }
    }
\end{algorithm}

\noindent
\textbf{The algorithm.}
In our algorithm \DIST (Algorithm~\ref{alg:listing}), \textsf{Listing} invokes \textsf{ListingDense} instead of another recursive call when
$l\hspace{-0.1em}-\hspace{-0.1em}1\hspace{-0.1em}>\hspace{-0.1em}3$ (i.e., the upcoming clique listing task does not degenerate to listing triangles or edges)
and the following criterion is satisfied:
\begin{equation}
    \text{(density of $\DAG_u$)} > \frac{l-2}{l-1} \cdot \frac{\big|V_{\DAG_u}\big|}{\big|V_{\DAG_u}\big|-1}
\end{equation}
This criterion is derived from Tur{\'a}n's theorem \cite{turan1941external},
which states that any graph with $n$ vertices can have at most
$\frac{r-1}{r}\cdot\frac{n^2}{2}$ edges if the graph does not include any $r$-clique.
If the density of the subgraph $\DAG_u$ is greater than $\frac{l-2}{l-1} \cdot \frac{\rule[-1ex]{0.5pt}{2.2ex}V_{\DAG_u}\rule[-1ex]{0.5pt}{2.2ex}}{\rule[-1ex]{0.5pt}{2.2ex}V_{\DAG_u}\rule[-1ex]{0.5pt}{2.2ex}\;-1}$, 
then it has more than $\frac{l-2}{l-1} \cdot \frac{\rule[-1ex]{0.5pt}{2.2ex}V_{\DAG_u}\rule[-1ex]{0.5pt}{2.2ex}\,^2}{2}$ edges,
and we are guaranteed to find an $(l\hspace{-0.1em}-\hspace{-0.2em}1\hspace{-0.1em})$-clique within $\DAG_u$ according to Tur{\'a}n's theorem.
On the other hand, if $\DAG_u$ is sparse, we have no guarantee of finding an $(l\hspace{-0.2em}-\hspace{-0.2em}1\hspace{-0.1em})$-clique within $\DAG_u$, and even if there is one, the computational cost of the listing process is low, and it is not worth creating the Induced Subgraph Trie in this case.

The routine \textsf{ListingDense} in Algorithm~\ref{alg:listingDense} is dedicated to listing cliques in dense subgraphs.
\textsf{ListingDense} finds the same $l$-cliques as \textsf{Listing} does.
However, when the cliques have already been computed and memoized in the Induced Subgraph Trie $T$, it efficiently retrieves these cliques using $T$.
\begin{definition}
A non-root node $t$ in the Induced Subgraph Trie $T$ is \emph{$l$-memoized} if and only if
\begin{itemize}[leftmargin=*]
    \item $\clink{t}{l}$ and $\slink{t}{l}$ are created, and
    \item $\slink{t}{l}$ is $l$-memoized and $\clink{t}{l}$ is $(l-1)$-memoized.
\end{itemize}
We assume that the root node is $l$-memoized for any $l < k$.
\end{definition}\noindent
If a node $t$ is $l$-memoized, then $l$-cliques within $\DAG[t]$ can be retrieved through a process described in Example~4.
Suppose \textsf{ListingDense} is invoked with respect to a DAG $\DAG$ and an integer $l$.
To track whether the $l$-cliques in $\DAG$ are memoized, the identifying string of $\DAG$ is inserted into the Induced Subgraph Trie and each node $t$ of the Induced Subgraph Trie has an attribute $t.\textsf{Marked}[l]$.
This attribute $t.\textsf{Marked}[l]$ is initially set to false, and \textsf{ListingDense} sets $t.\textsf{Marked}[l]$ to true if and only if $t$ is $l$-memoized.

We now explain in detail how to create $l$-sibling links and $l$-child links, and how to set \textsf{Marked}.
Let $\DAG$ be a DAG and $\langle t_0, t_1, \ldots, t_{|V_{\DAG}|} \rangle$ be the path from the root to the node corresponding to $\DAG$ in the Induced Subgraph Trie, where
$t_0$ is the root of the Induced Subgraph Trie.
Consider the for-loop at lines $17$--$28$ in Algorithm~\ref{alg:listingDense}.
Since the identifying string of $\DAG$ is obtained by concatenating the vertices of $\DAG$ in descending order,
we have $\ell(t_1)\hspace{-0.12em}\succ\hspace{-0.12em}\ell(t_2)\hspace{-0.12em}\succ\hspace{-0.12em}\cdots\hspace{-0.12em}\succ\hspace{-0.12em}\ell\big(t_{|V_{\DAG}|}\big)$.
At the $i$-th iteration, we list the $l$-cliques in $\DAG[t_i]$ that contains $\ell(t_i)$, 
i.e., the set $\mathcal{I}(\DAG[t_i], l)$, 
either from the Induced Subgraph Trie (line~\ref{line:ListingIST}) or through a recursive call (line~\ref{line:ListingDense}).
Prior to the $i$-th iteration, 
the sets $\mathcal{I}(\DAG[t_1], l), \mathcal{I}(\DAG[t_2], l), \ldots, \mathcal{I}(\DAG[t_{i-1}], l)$ are listed, 
and they collectively form the set $\mathcal{E}(\DAG[t_i], l)$.
At line~\ref{line:update-sibling}, we must set $\slink{t_i}{l}$ to a node $t_j$ on the path $\langle t_0, t_1, \ldots, t_{i-1} \rangle$, where $j$ is the smallest index such that $\DAG[t_j]$ includes all of $\exc{\DAG[t_i]}{l}$, 
or equivalently, $j$ is the smallest index such that $\mathcal{I}(\DAG[t_j], l)$ is not empty but all of $\inc{\DAG[t_{j+1}]}{l}, \inc{\DAG[t_{j+2}]}{l}, \ldots, \inc{\DAG[t_{i-1}]}{l}$ are empty.
If $\inc{\DAG[t_{i-1}]}{l}$ is not empty, then clearly $t_j = t_{i-1}$, i.e., $\slink{t_i}{l} = t_{i-1}$.
Suppose otherwise.
At the previous iteration, the link $\slink{t_{i-1}}{l}$ was determined in such a way that $\DAG\big[\slink{t_{i-1}}{l}\big]$ includes all of $\exc{\DAG[t_{i-1}]}{l}$.
Given that $\inc{\DAG[t_{i-1}]}{l}$ is empty, it follows that $\exc{\DAG[t_{i-1}]}{l}$ is the same as $\exc{\DAG[t_i]}{l}$.
Consequently, $\DAG\big[\slink{t_{i-1}}{l}\big]$ includes all of $\exc{\DAG[t_i]}{l}$, and thus we set $\slink{t_i}{l} = \slink{t_{i-1}}{l}$.

Now, let $u = \ell(t_i)$, $d = d_{\DAG}(u)$, and $\mathbf{t}' = \langle t'_{0}, t'_{1}, \ldots, t'_{d} \rangle$ be the path from the root to the node corresponding to $\DAG_u$.
At line~\ref{line:update-child}, we must set $\clink{t_i}{l}$ to a node $t'_j$ on the path $\mathbf{t}'$, where $j$ is the smallest index such that the $(l\hspace{-0.1em}-\hspace{-0.1em}1)$-cliques in $\DAG[t'_j]$ along with the vertex $u$ form the set $\inc{\DAG[t_i]}{l}$,
or equivalently, $j$ is the smallest integer such that $\DAG[t'_j]$ includes all the $(l\hspace{-0.1em}-\hspace{-0.1em}1)$-cliques in $\DAG_u$.
If $\clink{t'_d}{l\hspace{-0.1em}-\hspace{-0.1em}1}$ is not the root node, it indicates that an $(l\hspace{-0.1em}-\hspace{-0.1em}1)$-clique containing $\ell(t'_d)$ was found inside the recursive call at line~\ref{line:ListingDense}.
In this case, it is clear that $t'_j = t'_d$, i.e., $\clink{t_i}{l} = t'_d$.
Suppose otherwise.
Inside the recursive call, the link $\slink{t'_d}{l\hspace{-0.1em}-\hspace{-0.1em}1}$ was determined in such a way that $\DAG\big[\slink{t'_d}{l\hspace{-0.1em}-\hspace{-0.1em}1}\big]$ includes all of the $(l\hspace{-0.1em}-\hspace{-0.1em}1)$-cliques in $\DAG_u$.
Thus, we set $\clink{t_i}{l} = \slink{t'_d}{l\hspace{-0.1em}-\hspace{-0.1em}1}$.

Finally, at line~\ref{line:mark}, i.e., when we finished listing the sets $\inc{\DAG[t_i]}{l}$ and $\exc{\DAG[t_i]}{l}$
and memoizing them by setting $l$-sibling and $l$-child links, we set $t_i.\textsf{Marked}[l]$ to true.

\begin{algorithm}[t]
    \caption{Listing $k$-cliques using an Induced Subgraph Trie}
    \label{alg:listingIST}
    \DontPrintSemicolon
    \small

    \SetKwBlock{Begin}{function}{end function}

    \SetKwFunction{ListingIST}{\textsf{ListingIST}}

    \SetKw{continue}{continue}
    
    \SetKwInOut{Input}{Input}
    \SetKwInOut{Output}{Output}

    \Input{a node $t$ such that $\DAG[t]$ is \induced{d} or \prefix{d} with respect to a clique $C$ and an integer $l\hspace{-0.1em}=\hspace{-0.1em}k\hspace{-0.1em}-\hspace{-0.1em}|C|$}
    \Output{$k$-cliques in the original graph formed by combining the $l$-cliques in $\DAG[t]$ with the set $C$}
    
    \medskip

    \Begin(\ListingIST{$t$, $l$, $C$}){
        \eIf{$l = 0$}{
            Report $C$ as a $k$-clique.\;
        }{
            \While{\rm $t$ is not the root node}{
                \ListingIST{$\clink{t}{l}$, $l\hspace{-0.1em}-\hspace{-0.1em}1$, $C \cup \{ \ell(t) \}$}\;\label{line:listforchild}
                $t := \slink{t}{l}$\;
            }
        }
    }
\end{algorithm}

The function \textsf{ListingIST} in Algorithm~\ref{alg:listingIST} shows how to retrieve the $l$-cliques within the Induced Subgraph Trie using $l$-child links and $l$-sibling links.
The function call \textsf{ListingIST}($t$, $l$, $C$) finds the same $l$-cliques as \textsf{Listing}($\DAG[t]$, $l$, $C$) does, i.e., it retrieves all the $l$-cliques in the subgraph $\DAG[t]$ from the Induced Subgraph Trie and reports each of these $l$-cliques combined with the set $C$ as a $k$-clique in the original graph.
\textsf{ListingIST} is invoked in \textsf{ListingDense} with respect to a node $t$ such that $t.\textsf{Marked}[l]$ is true, i.e., all the $l$-cliques within the subgraph $\DAG[t]$ are memoized.
When $l$ equals $0$, the function simply reports $C$ as a $k$-clique in the original graph $\DAG$ (lines~2--3).
Otherwise, it first invokes a recursive call to retrieve the set $\inc{\DAG[t]}{l}$, i.e., the $l$-cliques that contain the vertex $\ell(t)$.
Then, it follows the $l$-sibling link $\slink{t}{l}$ to retrieve the set $\exc{\DAG[t]}{l}$, i.e., the $l$-cliques that do not contain the vertex $\ell(t)$.
This process continues until $t$ becomes the root node. 
\begin{theorem}
    Algorithm~2 lists all $k$-cliques in $\DAG$.
\end{theorem}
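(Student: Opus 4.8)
The theorem concerns Algorithm~\ref{alg:listing}, that is, the invocation \textsf{Listing}$(\DAG,k,\varnothing,T)$ after the DAG $\DAG$ and an empty Induced Subgraph Trie $T$ have been built; \textsf{Listing} in turn calls \textsf{ListingDense} (Algorithm~\ref{alg:listingDense}) and \textsf{ListingIST} (Algorithm~\ref{alg:listingIST}) on the single global trie $T$. The plan is to prove, simultaneously by strong induction on $l$, three statements, all phrased in terms of induced subgraphs of the global $\DAG$ (so $\DAG[t]$ always denotes the subgraph of $\DAG$ whose identifying string is $r(t)$): \textbf{(C1)} if $t$ is $l$-memoized, then \textsf{ListingIST}$(t,l,C)$ outputs the sets $C\cup K$ for $K$ ranging over the $l$-cliques of $\DAG[t]$, each exactly once and with no spurious output; \textbf{(C2)} \textsf{ListingDense}$(\DAG',l,C,T)$ outputs the sets $C\cup K$ for $K$ ranging over the $l$-cliques of $\DAG'$, each exactly once and with no spurious output, and upon return every node on the root-to-$\DAG'$ path of $T$ is $l$-memoized with $\clink{\cdot}{l}$ and $\slink{\cdot}{l}$ equal to the nodes prescribed by their definitions; \textbf{(C3)} \textsf{Listing}$(\DAG',l,C,T)$ satisfies the output clause of (C2). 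Instantiating (C3) at $\DAG'=\DAG$, $l=k$, $C=\varnothing$ then yields the theorem. The base case $l=2$ holds for all three by inspection: each routine loops over the edge set and outputs each edge together with $C$; for (C2) one also checks, from the definitions together with the facts that every node's $1$-child link is the root and its $1$-sibling link is its parent, that lines~\ref{line:update-sibling2}--\ref{line:update-child2} install the correct $2$-links. I will also use, as a black box, the soundness of the pruning criterion (Section~\ref{sec:pruning}): if $\DAG_u$ ``can be pruned'' then it contains no $(l-1)$-clique, so skipping it, or collapsing the associated child link to the root, discards nothing.

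For the inductive step of (C3), the argument is the standard one for the general framework: the $l$-cliques of $\DAG'$ partition according to their $\prec$-minimum vertex $u$, and an $l$-clique has minimum $u$ exactly when it is $\{u\}$ together with an $(l-1)$-clique of $\DAG'_u$, because $\DAG'_u$ consists precisely of the vertices of $\DAG'$ that are adjacent to $u$ and follow it in the ordering; hence the loop over $u\in V_{\DAG'}$ that recurses at level $l-1$ on $\DAG'_u$ outputs every $l$-clique once. The only deviations from the framework are the pruning skip, handled above, and the dispatch to \textsf{ListingDense}, which is exactly (C2) at level $l-1$.

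The inductive step of (C2) is the heart of the proof. Fix the call and let $\langle t_0,\dots,t_m\rangle$ be the root-to-$\DAG'$ path, with $\ell(t_1)\succ\cdots\succ\ell(t_m)$. The first thing to record is that $\DAG[t_i]$ is the subgraph of $\DAG'$ induced by its $i$ largest vertices, so $\{\inc{\DAG[t_i]}{l}\}_{i=1}^{m}$ is precisely the partition of the $l$-cliques of $\DAG'$ by minimum vertex, and $\exc{\DAG[t_i]}{l}=\bigcup_{j<i}\inc{\DAG[t_j]}{l}$ disjointly. I then run an inner induction on the loop counter $i$ over the for-loop at lines~17--28 of Algorithm~\ref{alg:listingDense}, with the invariant: before iteration $i$, nodes $t_1,\dots,t_{i-1}$ are $l$-memoized with correct links and the output so far is exactly $\{C\cup K : K\in\bigcup_{j<i}\inc{\DAG[t_j]}{l}\}$. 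In iteration $i$, writing $u=\ell(t_i)$: if $t_i.\textsf{Marked}[l]$ is true then $t_i$ is already $l$-memoized --- here I use the global facts that $t.\textsf{Marked}[l]$ is set only together with $l$-memoization of $t$, and that $\clink{t}{l}$ and $\slink{t}{l}$ are functions of $\DAG[t]$ alone and hence never overwritten once created --- so $\clink{t_i}{l}$ is $(l-1)$-memoized and (C1) at level $l-1$ gives that \textsf{ListingIST}$(\clink{t_i}{l},l-1,C\cup\{u\})$ outputs $\{C\cup K : K\in\inc{\DAG[t_i]}{l}\}$; if $t_i.\textsf{Marked}[l]$ is false, I verify that line~\ref{line:update-sibling} sets $\slink{t_i}{l}$ and line~\ref{line:update-child} sets $\clink{t_i}{l}$ to the nodes in their definitions --- this is precisely the case analysis in the text, namely $\slink{t_i}{l}=t_{i-1}$ when $\inc{\DAG[t_{i-1}]}{l}\neq\varnothing$ and $\slink{t_i}{l}=\slink{t_{i-1}}{l}$ otherwise, together with the split on whether $\clink{t'_d}{l-1}$ is the root --- using the inner hypothesis and the postcondition of the recursive call \textsf{ListingDense}$(\DAG'_u,l-1,C\cup\{u\},T)$ at line~\ref{line:ListingDense}, which by (C2) at level $l-1$ makes the entire root-to-$\DAG'_u$ path $(l-1)$-memoized with correct links and which itself outputs $\{C\cup K : K\in\inc{\DAG[t_i]}{l}\}$; line~\ref{line:mark} then restores the invariant. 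When the loop ends, every $t_i$ is $l$-memoized with correct links and the total output is $\{C\cup K : K\text{ is an }l\text{-clique of }\DAG'\}$, which is (C2).

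Finally, for (C1): unfold the while-loop of \textsf{ListingIST}. The $l$-sibling chain starting at $t$ --- that is, $t$, then $\slink{t}{l}$, then its $l$-sibling link, and so on --- strictly ascends toward the root, since every $l$-sibling link points to a proper ancestor, so the chain is finite; moreover following $\slink{\cdot}{l}$ preserves $l$-memoization, so every node of the chain is $l$-memoized. Iterating the identity that the $l$-cliques of any node's subgraph $\DAG[s]$ form the disjoint union of $\inc{\DAG[s]}{l}$ with the $l$-cliques of $\DAG[\slink{s}{l}]$, with the root contributing nothing since its induced subgraph is empty, decomposes the $l$-cliques of $\DAG[t]$ as the disjoint union over the chain of the sets $\inc{\DAG[\cdot]}{l}$; for each chain node $t'$, the $l$-child-link definition and (C1) at level $l-1$ (applicable because $t'$ is $l$-memoized, so $\clink{t'}{l}$ is $(l-1)$-memoized) make line~\ref{line:listforchild} output exactly $\{C\cup K : K\in\inc{\DAG[t']}{l}\}$, and summing over the chain proves (C1). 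I expect the main obstacle to be the link-correctness clause of (C2): proving that the in-place updates at lines~\ref{line:update-sibling2}--\ref{line:update-child2} and \ref{line:update-sibling}--\ref{line:update-child}, performed across many interleaved recursive calls that all mutate the one global trie, really do install the links required by the ``nearest to the root'' definitions. This splits into (a) showing that $\clink{t}{l}$ and $\slink{t}{l}$ depend on $\DAG[t]$ only, so that a correct value is permanent and \textsf{Marked} is monotone, and (b) checking the minimality clause --- that the child target equals $t'_d$ exactly when some $(l-1)$-clique of $\DAG'_u$ contains $\ell(t'_d)$, and that otherwise $\slink{t'_d}{l-1}$, already set inside the recursive call to the shallowest ancestor still carrying all $(l-1)$-cliques of $\DAG'_u$, is itself the shallowest valid target --- along with the degenerate cases ($\Nbr{u}=\varnothing$, isolated vertices, pruned branches) in which the child link collapses to the root, for which it must be verified that \textsf{ListingIST} invoked at the root outputs nothing.
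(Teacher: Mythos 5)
Your proposal is correct and follows essentially the same route as the paper: the core of both arguments is the invariant that whenever $t.\textsf{Marked}[l]$ is set, $t$ is genuinely $l$-memoized, with $\slink{t}{l}$ $l$-memoized and $\clink{t}{l}$ $(l-1)$-memoized, established inductively over the recursion of \textsf{ListingDense}. The paper's written proof records only that invariant and defers the remaining ingredients --- the inclusive/exclusive partition by minimum vertex, the correctness of the link targets set at lines~9--15 and 22--27, and the fact that $l$-memoization makes \textsf{ListingIST} retrieve exactly the $l$-cliques --- to the surrounding prose and Example~4, whereas you make all of them explicit clauses (C1)--(C3) of a simultaneous induction; the ``main obstacles'' you flag at the end are precisely the parts the paper treats informally.
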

\begin{proof} 
    Since \DIST differs from the original framework (Algorithm~1) by invoking \textsf{ListingDense} when $\DAG$ satisfies criterion (1), 
    it suffices to show that \textsf{ListingDense}($\DAG, l, C, T$) finds all $l$-cliques in $\DAG$.
    To this end, we show that when \textsf{listingDense} sets $t.\textsf{Marked}[l]$ to true (at line~28), $\clink{t}{l}$ and $\slink{t}{l}$ are created, $\slink{t}{l}$ is $l$-memoized, and $\clink{t}{l}$ is $(l-1)$-memoized.
    At this moment, $\slink{t}{l}$ was created at line~22 and $\clink{t}{l}$ was created at line~27.
    Furthermore, $\textsf{Marked}[l]$ has been set to true for all nodes on the path from the root to the node corresponding to $\DAG[t] \setminus  \ell(t) $, and $\textsf{Marked}[l-1]$ has been set to true for all nodes on the path from the root to the node corresponding to $\DAG[t]_{\ell(t)}$.
    Since $\slink{t}{l}$ is on the path from the root to the node corresponding to $\DAG[t] \setminus \ell(t)$, $\slink{t}{l}.\textsf{Marked}[l]$ is true, and so $\slink{t}{l}$ is $l$-memoized.
    Also, since $\clink{t}{l}$ is on the path from the root to the node corresponding to $\DAG[t]_{\ell(t)}$, $\clink{t}{l}.\textsf{Marked}[l-1]$ is true, and so $\clink{t}{l}$ is $(l-1)$-memoized.
    It follows that $t$ is $l$-memoized.
\end{proof}

%
%

\section{Pruning by Soft Embedding}
\label{sec:pruning}
\begin{figure}[t]
    \centering
    \begin{subfigure}{0.49\linewidth}
        \includegraphics[scale=0.52, trim=5mm 3mm 10mm 3mm, clip=true]{./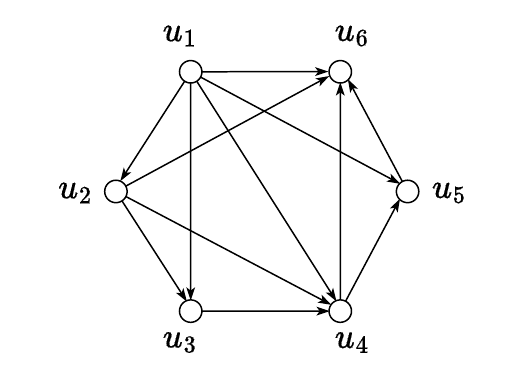}
        \caption{$\DAG$}
        \label{fig:DAG2}
        \end{subfigure}\hfill
    \begin{subfigure}{0.49\linewidth}
        \includegraphics[scale=0.52, trim=10mm 3mm 0mm 3mm, clip=true]{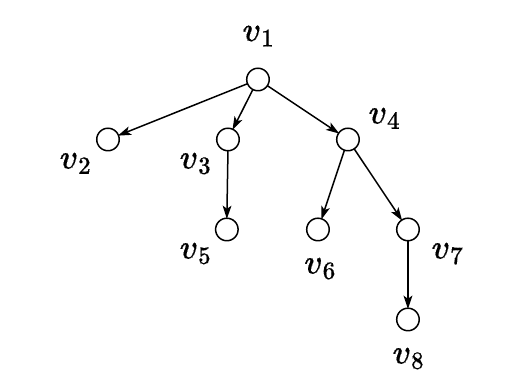}
        \caption{$4$-tree}
        \label{fig:4-tree}
    \end{subfigure}

    \begin{subfigure}{0.98\linewidth}
    \centering
        \hspace*{-3mm}
        \includegraphics[scale=0.52, trim=9.0mm 0.0mm 9.5mm 0mm, clip=true]{./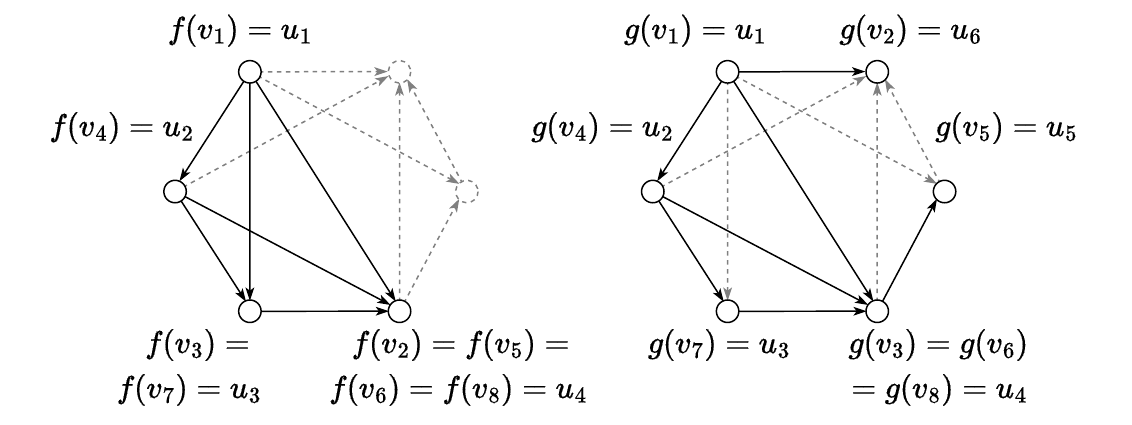}
        \begin{minipage}[t]{.56\linewidth}
            \centering
            \subcaption[]{Soft embedding of $4$-tree \\ \hphantom{(c)} that is a $4$-clique}
            \label{fig:tree-embedding1}
        \end{minipage}
        \begin{minipage}[t]{.42\linewidth}
            \centering
            \subcaption[]{Soft embedding of $4$-tree \\ \hphantom{(d)} that is not a $4$-clique}
            \label{fig:tree-embedding2}
        \end{minipage}
    \end{subfigure}
    \caption{DAG $\DAG$, $4$-tree, and two soft embeddings $f$ and $g$.}
\end{figure}
\noindent
In this section, we introduce a novel pruning technique designed to efficiently reduce the search space.
Li et al. \cite{OrderingHeuristics} used a pruning technique based on greedy coloring.
In their approach, within each $1$-induced subgraph, greedy coloring is applied and the vertices are reordered according to their color values so that the color values can be used to reduce the search space.
However, this reordering of the vertices within induced subgraphs in Li et al.'s algorithm makes it incompatible with the Induced Subgraph Trie,
where all stored subgraphs must follow a common total ordering.
Unlike the approach by Li et al.,
our method does not require reordering vertices within induced subgraphs, and it is fully compatible with the Induced Subgraph Trie.
\begin{definition}
   An \emph{$l$-tree} is a directed rooted tree such that
   \begin{enumerate}
       \item if $l=1$, then the $l$-tree is a single-vertex tree, and
       \item otherwise, the root of the $l$-tree has $l\hspace{-0.1em}-\hspace{-0.1em}1$ child nodes, where the subtree rooted at the $i$-th child is an $i$-tree.
   \end{enumerate}
\end{definition}
\begin{example}
Figure~\ref{fig:4-tree} shows the $4$-tree.
The root $v_1$ has $3$ child nodes, namely, $v_2$, $v_3$, $v_4$.
The subtrees rooted at $v_2$, $v_3$, and $v_4$ are a $1$-tree, a $2$-tree, and a $3$-tree, respectively.
\end{example}
\begin{definition}
    A \emph{soft embedding} of an $l$-tree $T$ in a DAG $\DAG$ is a mapping $f: V_T \to V_{\DAG}$ such that
    \begin{enumerate}
        \item if $(v, v') \in E_{T}$, then $(f(v), f(v')) \in E_{\DAG}$, and
        \item for sibling nodes $v$ and $v'$ such that $d_{T}(v) < d_{T}(v')$, it holds that $f(v) \succ f(v')$.
    \end{enumerate}
\end{definition}
Notice that a soft embedding of an $l$-tree does not need to be injective.
If a soft embedding of an $l$-tree in $\DAG$ maps the root of the $l$-tree to a vertex $u \in V_{\DAG}$, the soft embedding is said to be \emph{rooted at} $u$.
\begin{example}\label{example:embedding}
    Consider the DAG $\DAG$ in Figure~\ref{fig:DAG2}. Figures \ref{fig:tree-embedding1} and~\ref{fig:tree-embedding2} show two soft embeddings of $4$-tree in $\DAG$ rooted at $u_1$, i.e.,
    $f\hspace{-0.3em}=\hspace{-0.3em}\big\{ (v_1, u_1), (v_2, u_4), (v_3, u_3), (v_4, u_2), (v_5, u_4), (v_6, u_4),\allowbreak (v_7, u_3), (v_8, u_4) \big\}$\hspace{0.1em}and\hspace{0.1em}$g\hspace{-0.2em}=\hspace{-0.3em}\big\{ (v_1, u_1), (v_2, u_6), (v_3, u_4), (v_4, u_2),\allowbreak (v_5, u_5), (v_6, u_4), (v_7, u_3), (v_8, u_4) \big\}$.
    Note that the image of the $4$-tree through $f$, i.e., $\{ u_1, u_2, \allowbreak u_3, u_4 \}$, is a $4$-clique in $\DAG$.
\end{example}
An $l$-tree can be viewed as an $l$-clique torn apart to form a tree;
given an $l$-tree, if we contract the vertices with the same out-degree into a single vertex, it becomes an $l$-clique.

\begin{lemma}
    Given a DAG $\DAG$, if there exists no embedding of an $l$-tree rooted at a vertex $u \in V_{\DAG}$, then there exists no $l$-clique in $\DAG$ with $u$ as its minimum vertex.
\end{lemma}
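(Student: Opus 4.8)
The plan is to prove the contrapositive by strong induction on $l$. Concretely, I would strengthen the statement slightly and show: for every $l$-clique $C = \{w_1 \prec w_2 \prec \cdots \prec w_l\}$ in $\DAG$ whose minimum vertex is $u = w_1$, there is a soft embedding of the $l$-tree in $\DAG$ rooted at $u$. Once this is established the lemma follows at once, since the existence of such a clique would contradict the hypothesis that no soft embedding of an $l$-tree rooted at $u$ exists.

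For the base case $l = 1$, the $1$-tree is a single vertex, and mapping it to $u$ is trivially a soft embedding rooted at $u$: there are no edges and no sibling pairs to check.

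For the inductive step ($l \ge 2$), the idea is to match the recursive structure of the $l$-tree to the ``top'' sub-cliques of $C$. Let $\rho$ be the root of the $l$-tree $T$, with children $c_1, \dots, c_{l-1}$, where the subtree $T_i$ rooted at $c_i$ is an $i$-tree. For each $i \in \{1, \dots, l-1\}$ the set $C_i = \{w_{l-i+1}, \dots, w_l\}$ is an $i$-clique (a subset of $C$) whose minimum vertex is $w_{l-i+1}$, so the induction hypothesis supplies a soft embedding $f_i$ of $T_i$ rooted at $w_{l-i+1}$. I would then define $f$ on $V_T$ by $f(\rho) = w_1$ and $f|_{V_{T_i}} = f_i$; this is well defined because the $V_{T_i}$ are pairwise disjoint and disjoint from $\{\rho\}$ (soft embeddings need not be injective, so overlapping images cause no conflict). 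Checking that $f$ is a soft embedding then reduces to two short verifications: the new edges $(\rho, c_i)$ map to $(w_1, w_{l-i+1})$ with $w_1 \prec w_{l-i+1}$ and both vertices in the clique $C$, hence $(f(\rho), f(c_i)) \in E_{\DAG}$; and the only new sibling pairs are among $c_1, \dots, c_{l-1}$, where $c_i$ (the root of an $i$-tree) has out-degree $i-1$, so $d_{T}(c_i) < d_{T}(c_j)$ iff $i < j$, in which case $l-i+1 > l-j+1$ and therefore $f(c_i) = w_{l-i+1} \succ w_{l-j+1} = f(c_j)$, exactly as the soft-embedding condition requires.

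The only point needing care — and the step I would double-check — is the bookkeeping in the inductive step: the $i$-tree child must be fed the sub-clique consisting of the $i$ \emph{largest} vertices of $C$, so that the out-degree ordering of the root's children is consistent with the $\prec$-ordering of their images. The edge check, the disjointness argument, and the base case are then immediate.
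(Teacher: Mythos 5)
Your proposal is correct and follows essentially the same route as the paper: the paper also proves the contrapositive, observing that an $l$-clique with minimum vertex $u$ yields a soft embedding of the $l$-tree rooted at $u$ (illustrated by its Figure~3(c)), and your explicit inductive construction—mapping the $i$-tree child to the sub-clique of the $i$ largest vertices—is exactly the embedding the paper's figure depicts. You have simply supplied the verification details (edge condition, sibling out-degree ordering, well-definedness) that the paper leaves implicit.
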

\begin{proof}
    If $\DAG$ includes an $l$-clique with $u$ as its minimum vertex, then there exists a soft embedding of an $l$-tree that maps the root of the $l$-tree to the minimum vertex $u$ of this $l$-clique, as shown in Figure~\ref{fig:tree-embedding1}.
\end{proof}

We focus on identifying the maximum $l$ value for each vertex $u \in V_{\DAG}$ such that a soft embedding of an $l$-tree in $\DAG$ rooted at $u$ exists, denoted by $r_{\DAG}(u)$.
This is because if there exists a soft embedding of an $l$-tree in $\DAG$ rooted at $u$,
then there also exists a soft embedding of an $(l\hspace{-0.1em}-\hspace{-0.1em}1)$-tree in $\DAG$ rooted at $u$.

Algorithm~\ref{alg:pruning} computes $r_{\DAG}(u)$ for each vertex $u$ of $\DAG$, following descending order in the total ordering of vertices in $\DAG$.
Assume we are to compute $r_{\DAG}(u)$ for a vertex $u$, and $u$ has $d$ out-neighbors $u_1, u_2, \ldots, u_{d}$, ordered such that $u_1\hspace{-0.2em}\prec\hspace{-0.2em}u_2\hspace{-0.2em}\prec\hspace{-0.2em}\cdots\hspace{-0.2em}\prec\hspace{-0.2em}u_d$.
Initially, we set $r_{\DAG}(u)\hspace{-0.2em}:=\hspace{-0.2em}1$.
Then, we iterate through the out-neighbors of $u$ in descending order, starting with $u_{d}$ and progressing to $u_1$. 
If $r_{\DAG}(u) \le r_{\DAG}(u_i)$,
we increment $r_{\DAG}(u)$ by one;
otherwise, we proceed to the next out-neighbor of $u$.
As Algorithm~5 considers each edge of $\DAG$ exactly once, it takes $\mathcal{O}(\EG$) time.

\begin{theorem}
    Given a DAG $\DAG$, for each vertex $u \in \DAG$, $r_{\DAG}(u)$ computed by Algorithm~5 is the maximum $l$ value such that a soft embedding of an $l$-tree rooted at $u$ exists.
\end{theorem}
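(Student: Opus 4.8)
The plan is to prove the theorem by induction on the position of $u$ in the total ordering, processing vertices from largest to smallest, exactly as Algorithm~5 does. The engine of the proof is a structural characterization of when a soft embedding of an $l$-tree rooted at $u$ exists, phrased in terms of the values $r_{\DAG}(w)$ for the out-neighbors $w$ of $u$; with that in hand, correctness of the inner loop of Algorithm~5 reduces to correctness of a one-pass greedy for a longest-subsequence problem.

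\textbf{Step 1: the structural characterization.} First I would show that for every $u \in V_{\DAG}$ and every $l \ge 1$, a soft embedding of an $l$-tree rooted at $u$ exists if and only if there are out-neighbors $w_1 \succ w_2 \succ \cdots \succ w_{l-1}$ of $u$ (a strictly decreasing, hence distinct, sequence) such that a soft embedding of an $i$-tree rooted at $w_i$ exists for each $i$. The forward direction uses the recursive definition of an $l$-tree: its root has children $c_1,\dots,c_{l-1}$ with $c_i$ the root of an $i$-tree, so $d_T(c_i)=i-1$ and these out-degrees are pairwise distinct; condition~(1) of a soft embedding forces each $f(c_i)\in\Nbr{u}$, condition~(2) then forces $f(c_1)\succ\cdots\succ f(c_{l-1})$, and the restriction of $f$ to the subtree at $c_i$ is a soft embedding of an $i$-tree rooted at $f(c_i)$; set $w_i=f(c_i)$. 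Conversely, given such $w_i$'s and soft embeddings $g_i$ of $i$-trees rooted at them, gluing the $g_i$'s beneath a fresh root mapped to $u$ yields a soft embedding of the $l$-tree, with conditions~(1) and~(2) checked routinely (the only cross-subtree constraints are the root edges $(u,w_i)\in E_{\DAG}$ and the sibling order $w_1\succ\cdots\succ w_{l-1}$). Taking maxima over $l$ and using that a prefix of a valid sequence is again valid, this says $r_{\DAG}(u)=m+1$, where $m$ is the length of the longest strictly decreasing sequence $w_1\succ\cdots\succ w_m$ of out-neighbors of $u$ with $r_{\DAG}(w_i)\ge i$ for all $i$ (here $r_{\DAG}(\cdot)$ on the right is the true value).

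\textbf{Step 2: the greedy and the induction.} Processing vertices in decreasing order, by the inductive hypothesis the already-computed values $r_{\DAG}(w)$ for out-neighbors $w\succ u$ are correct, so it suffices to show the inner loop of Algorithm~5 produces $m+1$ for the $m$ above. Write $p_1\succ p_2\succ\cdots\succ p_d$ for the out-neighbors of $u$ in the order the loop visits them (namely $p_j=u_{d-j+1}$), and let $M_j$ be the length of the longest valid decreasing subsequence drawn from $p_1,\dots,p_j$ (so $M_0=0$, $M_d=m$). I would maintain the invariant that after the loop has handled $p_1,\dots,p_j$, the stored value of $r_{\DAG}(u)$ equals $1+M_j$. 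The increment test ``$r_{\DAG}(u)\le r_{\DAG}(p_j)$'' is exactly ``$M_{j-1}+1\le r_{\DAG}(p_j)$'', i.e., the test of whether an optimal length-$M_{j-1}$ subsequence of $p_1,\dots,p_{j-1}$ can be extended by appending $p_j$ as its last element (legal because $p_j$ is smaller than all earlier $p_i$). A short exchange argument closes it: if the test passes, extension yields a valid subsequence of length $M_{j-1}+1$, and since adding one element to the pool can raise the optimum by at most one, $M_j=M_{j-1}+1$; if the test fails, any length-$(M_{j-1}+1)$ valid subsequence of $p_1,\dots,p_j$ must use $p_j$ (otherwise it lies in $p_1,\dots,p_{j-1}$, contradicting $M_{j-1}$), necessarily as its last element, forcing $r_{\DAG}(p_j)\ge M_{j-1}+1$, a contradiction; hence $M_j=M_{j-1}$. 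So the invariant survives, and after all $d$ out-neighbors the stored value is $1+M_d=m+1=r_{\DAG}(u)$, completing the induction; the base case is the maximum vertex, which is a sink ($d=0$, $M_0=0$), giving $r_{\DAG}(u)=1$, matching the fact that only a $1$-tree soft-embeds at a sink.

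\textbf{Main obstacle.} The delicate point is getting Step~1 exactly right: one must be careful that the root of an $l$-tree has children whose subtrees have pairwise distinct out-degrees $0,1,\dots,l-2$, so condition~(2) of a soft embedding pins down the order of the images $f(c_i)$ with no ambiguity, and one must use that a soft embedding need not be injective (so the glued pieces $g_i$ may overlap freely, and nothing breaks when $w_i$'s neighborhoods interleave). Once the characterization is nailed down, the greedy analysis in Step~2 is a standard single-pass exchange argument.
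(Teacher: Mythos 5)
Your proposal is correct, and its skeleton matches the paper's: induction over the vertices in decreasing order, decomposition of the $l$-tree at its root, and a correctness argument for the greedy inner loop. The differences are in how the decomposition is packaged and in how much is actually proved. The paper uses the recursive view (an $l$-tree splits into two $(l-1)$-trees by detaching the last child) and argues only that each increment of $r_{\DAG}(u)$ is \emph{sound}, i.e., that whenever the test $r_{\DAG}(u)\le r_{\DAG}(u_i)$ passes, the two embeddings can be combined into an embedding of a larger tree; it then asserts, without a separate argument, that the final value is the \emph{maximum} such $l$. You instead unroll the recursion into a flat iff-characterization --- an $l$-tree soft-embeds at $u$ iff there is a strictly decreasing sequence $w_1\succ\cdots\succ w_{l-1}$ of out-neighbors with $r_{\DAG}(w_i)\ge i$ --- which turns the inner loop into a one-pass greedy for a longest valid decreasing subsequence, and your exchange argument ($M_j=M_{j-1}$ when the test fails, because any longer subsequence would have to end at $p_j$ and force $r_{\DAG}(p_j)\ge M_{j-1}+1$) supplies exactly the maximality direction that the paper leaves implicit. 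So your route is not merely equivalent: it closes the upper-bound half of the claim that the published proof glosses over, at the cost of having to be careful (as you note) that condition~(2) of a soft embedding pins down the order of the root's children's images because their subtree out-degrees $0,1,\dots,l-2$ are pairwise distinct, and that non-injectivity lets the glued pieces overlap harmlessly.
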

\begin{proof}
We begin by observing the following property of $l$-trees.
Let $T$ be an $l$-tree with root $v$, and $v'$ be the last child node of $v$, implying that the subtree of $T$ rooted at $v'$ is an $(l\hspace{-0.1em}-\hspace{-0.1em}1\hspace{-0.1em})$-tree.
By removing the directed edge between $v$ and $v'$, $T$ can be split into two separate $(l\hspace{-0.1em}-\hspace{-0.2em}1\hspace{-0.1em})$-trees.
The first tree, rooted at $v'$, remains an unchanged $(l\hspace{-0.1em}-\hspace{-0.2em}1\hspace{-0.1em})$-tree.
The second tree, rooted at $v$ now with $l\hspace{-0.1em}-\hspace{-0.2em}2\hspace{-0.1em}$ child nodes,
is another $(l\hspace{-0.1em}-\hspace{-0.2em}1\hspace{-0.1em})$-tree. \\
This observation gives a method for determining the existence of a soft embedding of an $l$-tree rooted at a vertex $u$ in $\DAG$.
Suppose we have an embedding $g_1$ of an $(l\hspace{-0.1em}-\hspace{-0.2em}1\hspace{-0.1em})$-tree rooted at $u$, with $g_1$ mapping the last child node of the root to an out-neighbor $u'$ of $u$.
Further assume we have another soft embedding $g_2$ of an $(l\hspace{-0.1em}-\hspace{-0.2em}1\hspace{-0.1em})$-tree rooted at $u''$, where $u''$ is an out-neighbor of $u$ such that $u'' \prec u'$.
Then, combining $g_1$ and $g_2$ gives a soft embedding of an $l$-tree rooted at $u$. \\
Now consider the computation of $r_{\DAG}(u)$.
It is important to note that $r_{\DAG}(u')$ for each out-neighbor $u'$ was computed previously.
Algorithm~5 initializes $r_{\DAG}(u)$ to 1, because $u$, as a single vertex, is a $1$-tree.
At each iteration of the for-loop at line~4, Algorithm~5 increments $r_{\DAG}(u)$ by one if and only if $r_{\DAG}(u) \le r_{\DAG}(u')$, where $u'$ is an out-neighbor of $u$.
At this moment, $r_{\DAG}(u) \le r_{\DAG}(u')$ indicates the presence of two distinct soft embeddings of an $r_{\DAG}(u)$-tree, respectively rooted at $u$ and $u'$.
These can be combined to form a soft embedding of an $(r_{\DAG}(u)\hspace{-0.05em}+\hspace{-0.05em}1)$-tree rooted at $u$.
When the for-loop at line~4 finishes, $r_{\DAG}(u)$ is the maximum $l$ value such that a soft embedding of an $l$-tree rooted at $u$ exists.
\end{proof}

In line 10 of Algorithm~\ref{alg:listing} and line 23 of Algorithm~\ref{alg:listingDense}, we apply Algorithm~\ref{alg:pruning} on $\DAG' = \DAG\big[ N(u) \cup \{ u \} \big]$ and obtain $r_{\DAG'}(u)$.
If $r_{\DAG'}(u) < l$ after Algorithm~\ref{alg:pruning} is run on $\DAG'$, then there exists no soft embedding of an $l$-tree rooted at $u$.
By Lemma~1, we can safely prune the search space to find $l$-cliques containing $u$ as the minimum vertex.

\begin{algorithm}[t]
    \caption{Computing $r_{\DAG}(u)$\hspace{0.2em}for\hspace{0.2em}every\hspace{0.2em}vertex\hspace{0.2em}$u$\hspace{0.2em}in\hspace{0.2em}$\DAG$}
    \label{alg:pruning}
    \DontPrintSemicolon
    \small

    \SetKwBlock{Begin}{function}{end function}

    \SetKwFunction{Pruning}{\textsf{ComputeSoftEmbeddings}}

    \SetKw{continue}{continue}
    
    \SetKwInOut{Input}{Input}
    \SetKwInOut{Output}{Output}

        \ForEach{\rm $u \in V_{\DAG}$ in descending order}{
            $r_{\DAG}(u) := 1$\;
            Let $u_1, u_2, \ldots, u_d$ be the out-neighbors of $u$ such that $u_1 \prec u_2 \prec \cdots \prec u_d$.\;
            \For{\rm\sffamily $i := d$ to $1$}{
                \If{\rm\sffamily $r_{\DAG}(u) \le r_{\DAG}(u_i)$}{
                    Increment $r_{\DAG}(u)$ by one.\;
                }
            }
        }
\end{algorithm}

%

\section{Space Usage of \DIST}
\noindent
In this section, we analyze the space usage of \DIST and introduce a technique to manage the space usage.

Since a $k$-clique is included in the $1$-hop neighborhood of its minimum vertex, we may build one Induced Subgraph Trie for each subtask at depth 1 of the recursion, i.e.,
finding $(k\hspace{-0.1em}-\hspace{-0.1em}1)$-cliques within a $1$-induced subgraph.
Our empirical study has shown the following results (see Figure~\ref{fig:graph-vs-trie}).
\begin{itemize}[leftmargin=*]
    \item For large $k$ values, the space usage of \DIST is negligible since few $1$-induced subgraphs require building the Induced Subgraph Trie.
    This is because most $1$-induced subgraphs are pruned due to the pruning technique.
    Furthermore, for large $k$ values, subgraphs are less likely to satisfy criterion (1), so fewer subgraphs get stored in the Induced Subgraph Trie, leading to smaller space requirements for \DIST.
    \item For small $k$ values, although many $1$-induced subgraphs pass the filtering and requires building an Induced Subgraph Trie, the number of nodes in the Induced Subgraph Trie grows almost linearly with respect to the size of the $1$-induced subgraph and rarely requires excessive space.
\end{itemize}
These observations allow \DIST to share the Induced Subgraph Trie across multiple $1$-induced subgraphs, 
which speeds up the listing process, because subtasks at depth 1 are not entirely independent.

Now we introduce a bounding technique to efficiently manage the space usage of \DIST using a soft bound $\tau$ and a hard bound $2\tau$.
Since each node in the Induced Subgraph Trie has less than $k$ $l$-sibling/$l$-child links, the Induced Subgraph Trie uses $\mathcal{O}(kN)$ space, where $N$ is the number of nodes in the Induced Subgraph Trie.
After completing each subtask at depth 1, \DIST checks whether $kN$ is less than or equal to the soft bound $\tau$.
If so, \DIST continues using the same Induced Subgraph Trie for the next subtask at depth 1;
otherwise, \DIST deletes the Induced Subgraph Trie and starts with an empty one.
To prevent extreme cases where the Induced Subgraph Trie might grow excessively large,
\DIST deletes the Induced Subgraph Trie whenever $kN$ exceeds the hard bound $2\tau$, and it starts with an empty one.
This way, \DIST uses $\mathcal{O}(\tau)$ space.
In practice, we find $\tau = 16$ million sufficient to demonstrate the efficiency of \DIST while limiting memory usage to only a few GBs.

\begin{figure*}[h]
    \includegraphics[scale=0.26, trim=0mm 0mm 0mm 12mm]{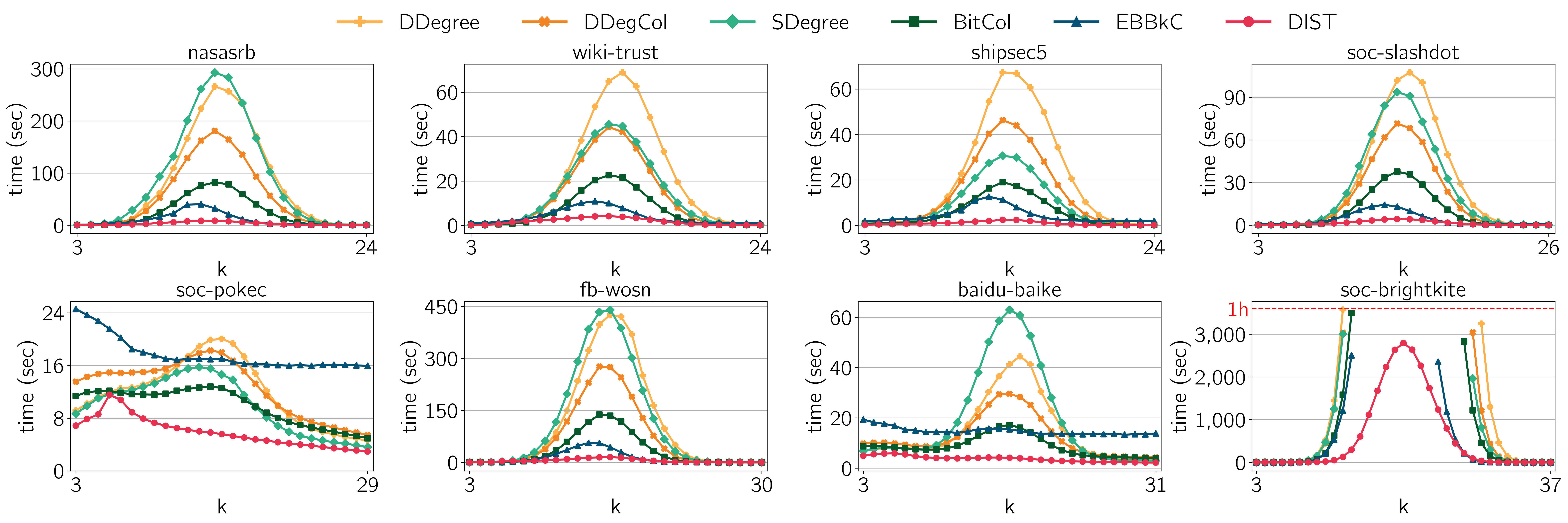}
    \caption{Running time of algorithms on small-$\omega$ graphs.}
    \label{fig:exp-small}
\end{figure*}
\begin{figure*}[h]
    \includegraphics[scale=0.26, trim=0mm 0mm 0mm 12mm]{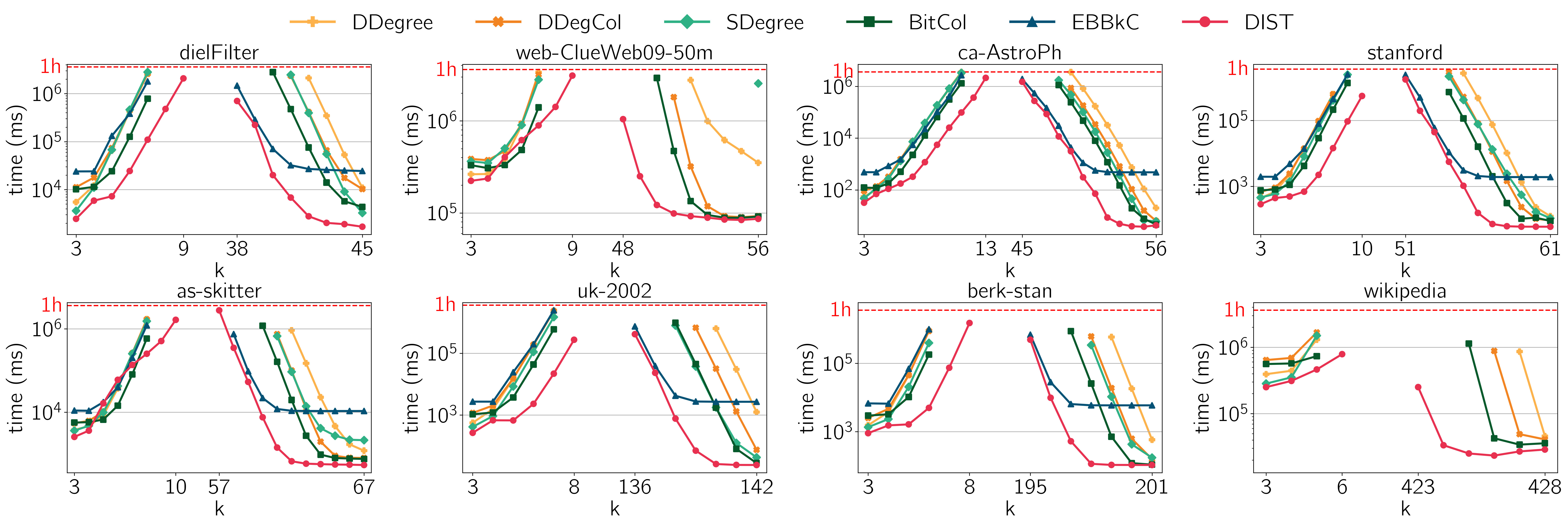}
    \caption{Running time of algorithms on large-$\omega$ graphs. The $y$-axis is in a logarithmic scale.}
    \label{fig:exp-large}
\end{figure*}

\section{Performance Evaluation}
\noindent
In this section, we conduct experiments to demonstrate the efficiency of our algorithm \DIST.

\begin{table}[t]
    \caption{Datasets listed by the size $\omega(G)$ of maximum clique }
    \label{tab:datasets}
    \centering
    \begin{tabular}{c|cccc}
        \toprule
        Dataset & $|V_G|$ & $|E_G|$ & $\delta(G)$ & $\omega(G)$ \\
        \midrule
        \textsf{nasasrb}            & 54,870       & 1,311,227     & 35    & 24 \\
        \textsf{wiki-trust}         & 138,587      & 715,883       & 55    & 24 \\
        \textsf{shipsec5}           & 179,104      & 2,200,076     & 29    & 24 \\
        \textsf{soc-slashdot}       & 70,068       & 358,647       & 53    & 26 \\
        \textsf{soc-pokec}          & 1,632,803    & 22,301,964    & 47    & 29 \\ 
        \textsf{fb-wosn}            & 63,731       & 817,090       & 52    & 30 \\
        \textsf{baidu-baike}        & 2,141,300    & 17,014,946    & 78    & 31 \\
        \textsf{soc-brightkite}     & 56,739       & 212,945       & 52    & 37 \\
        \midrule     
        \textsf{dielFilter}         & 420,409      & 16,653,308    & 58   & 45\\
        \textsf{web-ClueWeb09-50m}  & 428,136,613  & 454,472,685   & 193   & 56 \\
        \textsf{ca-AstroPh}         & 17,903       & 196,972       & 57    & 57 \\
        \textsf{stanford}           & 281,903      & 1,992,636     & 71    & 61 \\
        \textsf{as-skitter}         & 1,696,415    & 11,095,298    & 111   & 67 \\
        \textsf{uk-2002}            & 191,105      & 2,192,873     & 142   & 142 \\
        \textsf{berk-stan}          & 685,240      & 6,649,470     & 201   & 201 \\
        \textsf{wikipedia}          & 25,921,548   & 543,183,611   & 1,120  & 428 \\ 
        \bottomrule
    \end{tabular}
\end{table}

\subsection{Experimental Setup}
\noindent
\textbf{Algorithms.} The compared algorithms are as follows.
\begin{itemize}[leftmargin=*, labelsep=0.10em]
    \item \DDegree: a degree-ordering based algorithm~\cite{OrderingHeuristics}
    \item \DDegCol: a color-ordering based algorithm \cite{OrderingHeuristics}
    \item \SDegree:\hspace{0.2em}state-of-the-art\hspace{0.2em}degree-ordering\hspace{0.2em}based\hspace{0.2em}algorithm\hspace{0.15em}\cite{SetIntersectionSpeedup}
    \item \BitCol: state-of-the-art color-ordering based algorithm \cite{SetIntersectionSpeedup}
    \item \EBBkC:\hspace{0.05em}state-of-the-art\hspace{0.16em}edge-oriented\hspace{0.16em}branching\hspace{0.16em}algorithm\hspace{0.05em}\cite{EBBkC}
    \item \DIST: our algorithm
\end{itemize}
The source codes of \DDegree/\DDegCol\footnote{https://github.com/Gawssin/kCliqueListing}, \SDegree/\BitCol\footnote{https://github.com/zer0y/k-clique-listing} and \EBBkC\footnote{https://github.com/wangkaixin219/EBBkC} are publicly available.
In the experiments, we use bitmask width $\mathcal{L} = 24$ for \BitCol, as suggested in \cite{SetIntersectionSpeedup}.
Also, \EBBkC used $2$-plex early termination for $k \le \tau/2$ and $3$-plex early termination for $k > \tau/2$, as suggested in \EBBkC.
For large $k$ values in large-$\omega$ datasets, \DIST also used the $2$-plex early termination technique.

\noindent
\textbf{Datasets.} We obtained 16 real-world graphs from Network Repository\footnote{https://networkrepository.com/}.
Table~\ref{tab:datasets} lists the datasets used in the experiments.
The datasets are classified into two categories, depending on the size $\omega$ of the maximum clique.
The upper section of Table~\ref{tab:datasets} contains small-$\omega$ graphs,
and the lower section contains large-$\omega$ graphs.

\noindent
\textbf{Experimental settings.} The experiments were conducted on a Linux machine equipped with two Intel Xeon CPUs (E5-2680 v3 @ 2.50GHz).
Reported running times include both preprocessing and listing $k$-cliques.
We set a time limit of 1 hour for every experiment.
When we compare two algorithms, we compare the total running time that an algorithm takes for all of the $k$ values for which both of the algorithms finished within the time limit.
The overall speedup of our algorithm is given by the geometric mean of the ratio of total running times, taken over all datasets.

\subsection{Single-Threaded}\label{sec:exp-main}
\noindent
\textbf{Experiment on small-$\omega$ graphs.} Figure~\ref{fig:exp-small} shows the results for small-$\omega$ graphs using a single thread.
For each dataset, we measured the running time of each algorithm varying $k$ from $3$ to the maximum clique size, $\omega(G)$.
Our algorithm \DIST is the best performing algorithm among the competitors for most of the instances, followed by \EBBkC and \BitCol.
Specifically, for the \textsf{soc-brightkite} dataset, \DIST is the only algorithm capable of solving the problem within the given time limit for $k$ values between 15 and 23.
In general, the mid-range $k$ values (i.e., the $k$ values around $\omega(G) / 2$) are the most challenging ones for the $k$-clique listing problem,
and the running times of the algorithms tend to peak at mid-range $k$ values.
Our algorithm \DIST outperforms \BitCol with a factor of 5.06 in overall speedup and
up to a factor of 13.19 (when $k=28$ in \textsf{soc-brightkite}),
and it outperforms \EBBkC with a factor of 3.13 in overall speedup and
up to a factor of 22.86 (when $k=24$ in \textsf{shipsec5}, \EBBkC takes 1909.54 ms and \DIST takes 83.50 ms.
Also, when $k=14$ in \textsf{soc-brightkite}, \EBBkC takes 2504.01 sec and \DIST 296.63 sec.)
Since \DIST memoizes the cliques upon their initial computation, the running time of \DIST stays relatively low compared to the other algorithms even when the $k$ value reaches the challenging range.
Notably, for $k=28$ in \textsf{soc-brightkite},
\DIST made around 55 billion recursive calls to \textsf{ListingIST} but less than 4 million recursive calls to \textsf{Listing} and \textsf{ListingDense} combined
(whereas \BitCol made around 55 billion recursive calls for computing cliques),
implying that more than 99.99\% of the tasks for computing cliques are replaced by retrieving the cliques from the Induced Subgraph Trie, rather than recomputing.
As a result, \DIST achieved a remarkable speedup over \BitCol for this instance.

\noindent
\textbf{Experiment on large-$\omega$ graphs.} Figure~\ref{fig:exp-large} shows the results for large-$\omega$ graphs using a single thread.
Due to the exponential increase in the number of cliques, conducting experiments for mid-range $k$ values is infeasible.
For each dataset, we evaluated the running times of the algorithms by incrementally increasing $k$ from $3$ and decreasing $k$ from the maximum clique size, $\omega(G)$, until none of the algorithms could solve the problem within the given time limit.
Our algorithm \DIST is the only algorithm capable of solving the problem within the given time limit for a challenging range of $k$ values, i.e., for $k$ values nearing $\omega(G)/2$.
It significantly outperforms \BitCol with a factor of 14.47 in overall speedup
and up to a factor of 1697.23 (when $k=197$ in \textsf{berk-stan}),
and it outperforms \EBBkC with a factor of 3.66 in overall speedup and up to a factor of 201.32 (when $k=6$ in \textsf{berk-stan}).

\subsection{Evaluation of Techniques}
\noindent
In this section, we evaluate the effect of our techniques in terms of running time and memory usage.
We compare (1) \textsf{baseline}, in which none of our techniques are used (i.e., Algorithm~\ref{alg:framework}), (2) \textsf{memo}, in which only the memoization technique using the Induced Subgraph Trie is used, and (3) \textsf{memo+pruning}, in which both the memoization and the pruning techniques are used (i.e., \DIST).
We measure the memory usage of \textsf{baseline}, \textsf{memo}, and \textsf{memo+pruning} (\DIST)
along with \EBBkC and \BitCol, two top-performing runner-ups of the experiment in Section~\ref{sec:exp-main}.

\begin{figure*}[t]
    \centering
    \includegraphics[scale=0.26, trim=0mm 0mm 0mm 12mm]{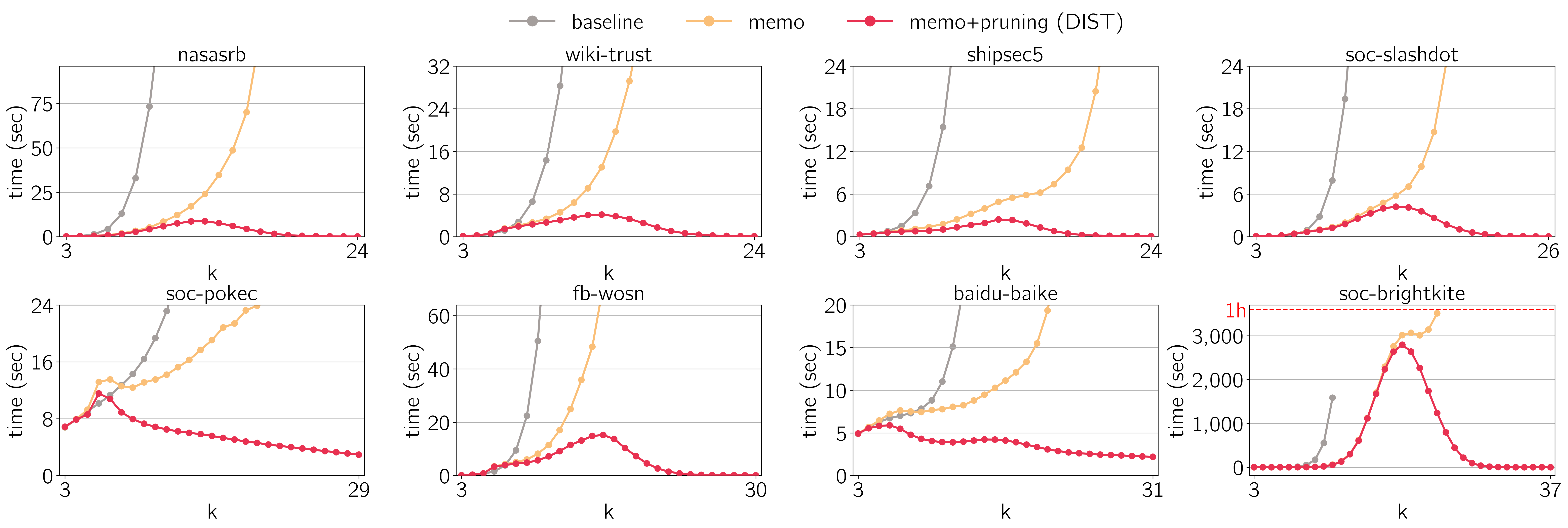}
    \caption{Effect of our techniques on running time.}
    \label{fig:exp-abl-time}
\end{figure*}
\begin{figure*}[t]
    \includegraphics[scale=0.26, trim=0mm 0mm 0mm 12mm]{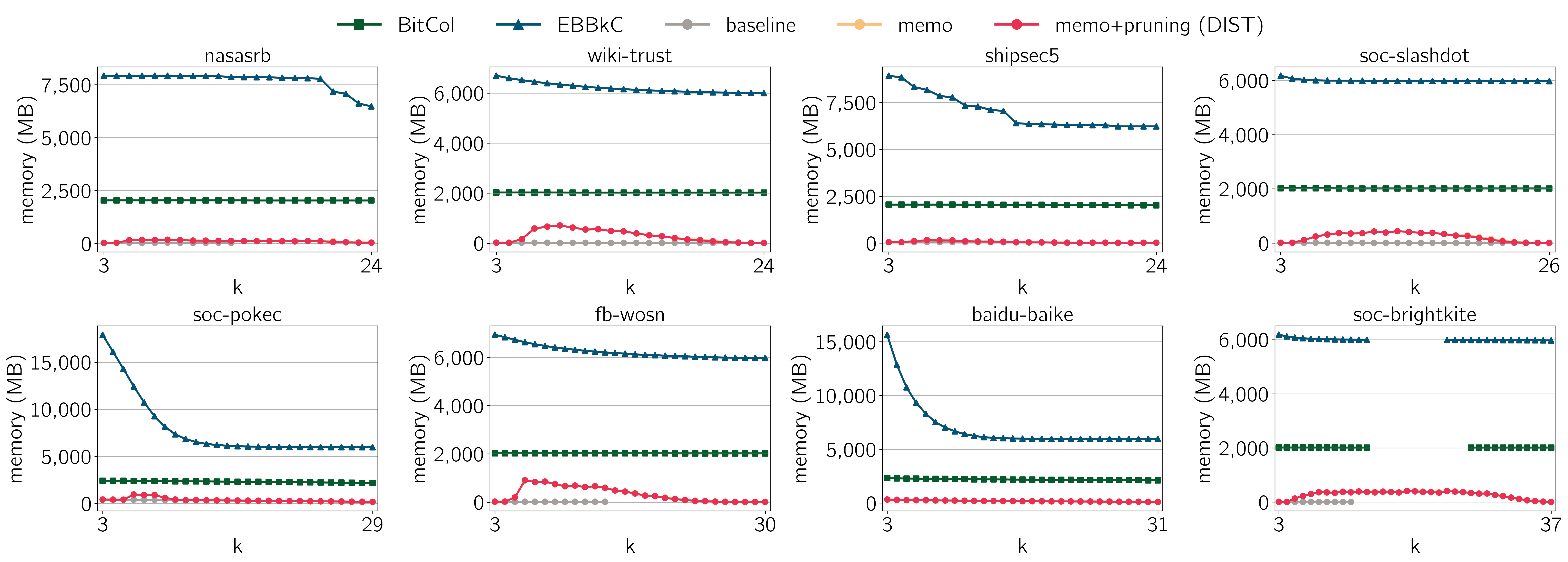}
    \caption{Peak memory usage of compared algorithms.}
    \label{fig:exp-abl-space}
\end{figure*}

\begin{figure}
    \centering
    \includegraphics[width=\linewidth]{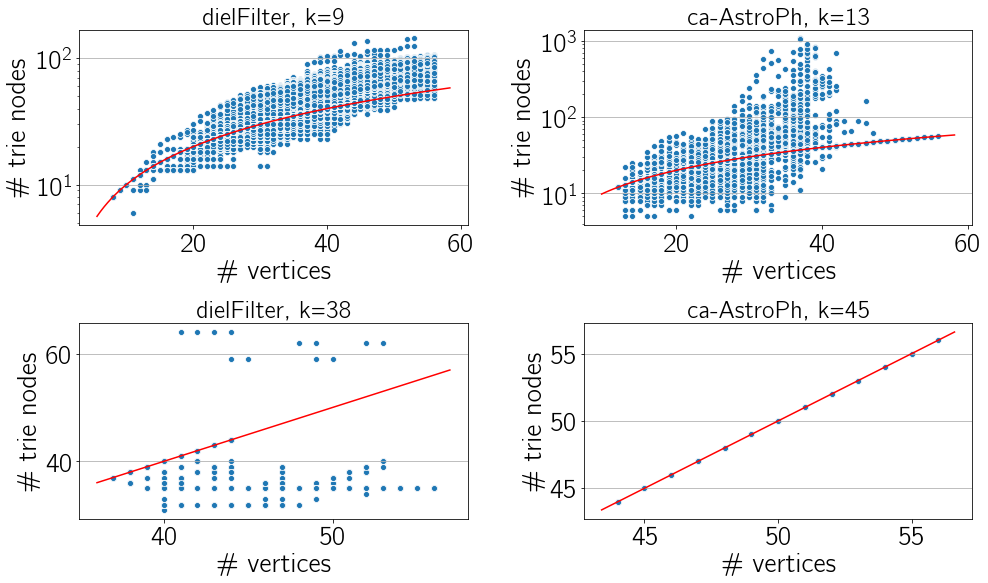}
    \caption{Size of Induced Subgraph Trie for a $1$-induced subgraph. Each data point represents a $1$-induced subgraph, where the $x$-axis value is the number of vertices in the $1$-induced subgraph and the $y$-axis value is the number of nodes in the Induced Subgraph Trie built for the $1$-induced subgraph. The red line represents the line $y=x$, which serves as a reference to show the general trend.}
    \label{fig:graph-vs-trie}
\end{figure}

\noindent
\textbf{Running time.} Figure~\ref{fig:exp-abl-time} shows the running times of \textsf{baseline}, \textsf{memo}, and \textsf{memo+pruning} (\DIST).
In general, \textsf{memo+pruning}\hspace{0.2em}(\DIST) is the best performer among the three variants.
The memoization technique using the Induced Subgraph Trie significantly boosts the listing process when compared to \textsf{baseline},
although this advantage diminishes for very small $k$ values due to overhead associated with constructing the Induced Subgraph Trie.
As $k$ gets larger, the difference between the running time of \textsf{baseline} and that of \textsf{memo} becomes more evident.
This is because as $k$ gets larger, the listing framework creates an exponentially growing number of recursive calls, or subtasks.
While \textsf{baseline} handles each of these subtasks from scratch, solving the same subtask repeatedly, \textsf{memo} solves the same subtask only once.
The pruning technique further improves performance, particularly at larger $k$ values.
This is because a greater number of subgraphs are subject to pruning, due to the fact that it is less likely for a subgraph to contain a soft embedding of an $l$-tree for larger $l$ values.

\begin{figure*}[t]
    \includegraphics[scale=0.26, trim=0mm 0mm 0mm 12mm]{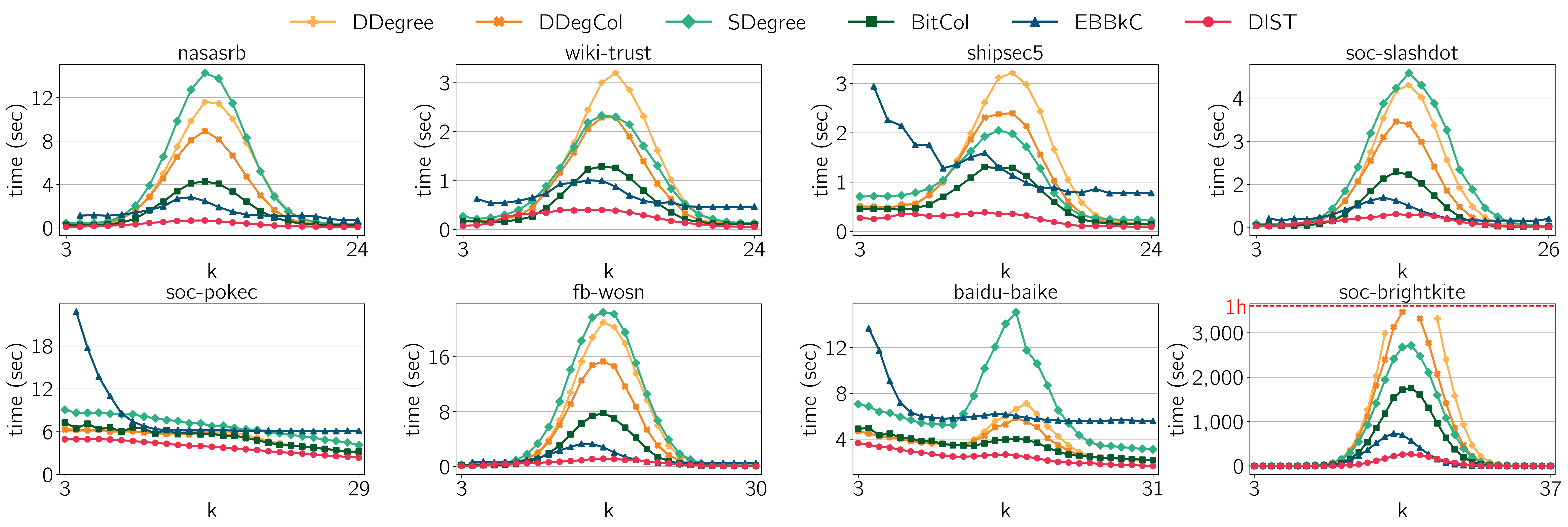}
    \caption{Running time of parallel algorithms on small-$\omega$ graphs using 24 threads.}
    \label{fig:exp-small-parallel}
\end{figure*}
\begin{figure*}[t]
    \includegraphics[scale=0.26, trim=0mm 0mm 0mm 12mm]{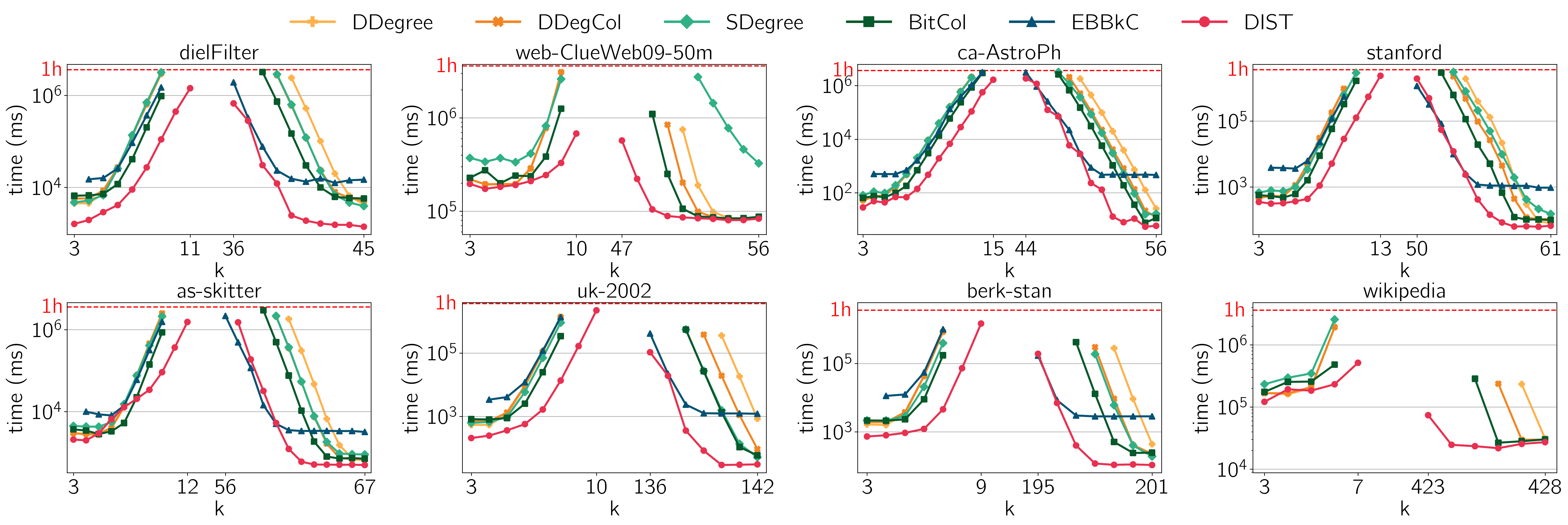}
    \caption{Running time of parallel algorithms on large-$\omega$ graphs using 24 threads.
    The $y$-axis is in a logarithmic scale.}
    \label{fig:exp-large-parallel}
\end{figure*}

\noindent
\textbf{Space usage.}
Figure~\ref{fig:graph-vs-trie} shows the size of Induced Subgraph Trie for a $1$-induced subgraph, for {\small\textsf{dielFilter}} and {\small\textsf{ca-AstroPh}}, which serve as examples illustrating the general trend.
We omit any $1$-induced subgraph if it is pruned or if the resulting Induced Subgraph Trie is empty.
As discussed in Section~VI, the Induced Subgraph Trie for a single $1$-induced subgraph rarely requires excessive space, allowing multiple $1$-induced subgraphs to share the same Induced Subgraph Trie. 

Figure~\ref{fig:exp-abl-space} shows the peak memory usage of the compared algorithms, where
all points of \textsf{memo} overlap with those of \textsf{memo+\allowbreak pruning} (\DIST). The reason is that although the pruning technique effectively removes subgraphs that include no $l$-clique, the number of subgraphs that need to be memoized remains unchanged, as \DIST inserts an $l$-induced subgraph into the Induced Subgraph Trie only when it is guaranteed to include an $l$-clique.
\BitCol, which uses an auxiliary data structure to speed up the computation of neighborhood intersections,
consistently requires at least 2GB of memory for any instance.
\EBBkC uses at least 5GB of memory for maintaining auxiliary data structures for the edge-oriented branching,
and requires more memory especially for larger graphs (e.g., \textsf{soc-pokec} and \textsf{baidu-baike}).
In contrast, \textsf{memo} and \textsf{memo+pruning} (\DIST) use less than 2GB of memory in all instances due to the bounding technique explained in Section~VI.
As the value of $k$ increases, the memory usage of our algorithm first increases 
because more subgraphs are stored into the Induced Subgraph Trie, and then
it decreases because subgraphs are less likely to meet the density threshold derived from Tur{\'a}n's theorem and thus less likely to be stored into the Induced Subgraph Trie.

\subsection{Parallelization}
\noindent
Danisch et al. \cite{kClist} proposed two strategies for parallelizing the $k$-clique listing algorithms, namely \textsf{NodeParallel} and \textsf{EdgeParallel}.
In the context of listing $k$-cliques in $\DAG$,
the \textsf{NodeParallel} strategy divides the task into $\hspace{-0.1em}\VG\hspace{-0.1em}$ subtasks, with each subtask listing $(k\hspace{-0.2em}-\hspace{-0.2em}1)$-cliques within $\DAG\big[ N(u) \big]$ for a vertex $u \in V_{\DAG}$.
On the other hand, the \textsf{EdgeParallel} strategy divides the task into $\hspace{-0.1em}\EG\hspace{-0.1em}$ subtasks, with each subtask listing $(k\hspace{-0.2em}-\hspace{-0.2em}2)$-cliques within $\DAG\big[ N(u) \cap N(v) \big]$ for an edge $(u, v) \in E_{\DAG}$.
These subtasks are then processed concurrently across multiple threads.
Prior research \cite{kClist, OrderingHeuristics, SetIntersectionSpeedup} has demonstrated that, for existing $k$-clique listing algorithms, the \textsf{EdgeParallel} strategy generally outperforms the \textsf{NodeParallel} strategy,
since it breaks down the whole task into smaller subtasks, thus enhancing parallelism.

However, our empirical study has shown a different result for our algorithm \DIST.
Specifically, \DIST shows improved performance under the \textsf{NodeParallel} strategy.
This improved performance with the \textsf{NodeParallel} strategy is due to the memoization technique,
which gains efficiency when subtasks that are locally proximate are executed by the same thread.
Thus, in this section, we compare \DIST under the \textsf{NodeParallel} strategy to existing algorithms under the \textsf{EdgeParallel} strategy.

\noindent
\textbf{Experiment on small-$\omega$ graphs.} Figure~\ref{fig:exp-small-parallel} shows the results for small-$\omega$ graphs using 24 threads.
For each dataset, we measured the running time of each algorithm varying $k$ from $3$ to the maximum clique size, $\omega(G)$.
The observed trend aligns with that from single-thread experiments, with the running times generally peaking at mid-range $k$ values.
The running time of our algorithm \DIST stays relatively low, being the best performing algorithm among the competitors.
Notably, \DIST outperformed competing algorithms across all $k$ values for \textsf{nasasrb}, \textsf{shipsec5}, \textsf{soc-pokec}, and \textsf{baidu-baike} datasets.
For other datasets, the superiority of \DIST is most evident at mid-range $k$ values, which are typically the most challenging.
\DIST outperforms \BitCol with a factor of 2.79 in overall speedup
and up to a factor of 9.57 (when $k=13$ in \textsf{soc-brightkite}),
and it outperforms \EBBkC with a factor of 2.77 in overall speedup and up to a factor of 18.56 (when $k=37$ in \textsf{soc-brightkite}, \EBBkC takes 296.85 ms and \DIST takes 15.99 ms.
Also, when $k=14$ in \textsf{soc-brightkite}, \EBBkC takes 121.38 sec and \DIST 16.60 sec.)

\noindent
\textbf{Experiment on large-$\omega$ graphs.} Figure~\ref{fig:exp-large-parallel} shows the results for large-$\omega$ graphs using 24 threads.
For each dataset, we evaluated the running times of the algorithms by incrementally increasing $k$ from $3$ and decreasing $k$ from the maximum clique size, $\omega(G)$, until none of the algorithms could solve the problem within the given time limit.
Notably, \DIST outperformed competing algorithms across all $k$ values for {\small\textsf{dielFilter}}, {\small\textsf{web-ClueWeb09-50m}}, {\small\textsf{uk-2002}}, and {\small\textsf{wikipedia}} datasets,
being the only algorithm capable of solving the problem within the given time limit for a challenging range of $k$-values.
It significantly outperforms \BitCol with a factor of 15.35 in overall speedup
and up to a factor of 1534.72 (when $k=138$ in \textsf{uk-2002}),
and it outperforms \EBBkC with a factor of 3.11 in overall speedup and up to a factor of 218.77 (when $k=7$ in \textsf{berk-stan}).

\section{Conclusion}
\noindent
In this paper, we proposed an innovative data structure called the \emph{Induced Subgraph Trie}, which enables us to store cliques of a graph compactly and retrieve them efficiently.
Using the Induced Subgraph Trie, we designed an algorithm \DIST for the $k$-clique listing problem, 
which outperforms the state-of-the-art algorithm in both time and space usage.
It will be an interesting future work to find more applications of the Induced Subgraph Trie.

\balance
\bibliographystyle{IEEEtran}
\bibliography{references}

\end{document}